\newcommand{\Long}  [2]                {#2}
\newcommand{\PaperState}               {\Long}
\newcommand{\IfPaperState}  [2]        {\PaperState{#1}{#2}}
\newcommand{\EM}                       { {\mathsf{EM}} }
\newcommand{\SK}                       { {\mathsf{SK}} }
\newcommand{\AC}                       { {\mathsf{AC}} }
\newcommand{\HA}                       { {\mathsf{HA}} }
\newcommand{\PA}                       { {\mathsf{PA}} }
\newcommand{\Nat}                      { {\tt N} }
\newcommand{\Bool}                     { {\tt Bool} }
\newcommand{\State}                    { {\tt S} }
\newcommand{\Update}                    { {\tt U} }
\newcommand{\NatSet}                   {\mathbb{N}}
\newcommand{\UpdateSet}                 {\mathbb{U}}
\newcommand{\SystemT}                  {\mathcal{T}}
\newcommand{\SystemTG}                  {\mathsf{T}}
\newcommand{\True}                     { {\tt{True}} }
\newcommand{\False}                    { {\tt{False}} }
\newcommand{\Class}                    {\mbox{\tiny Class}}
\newcommand{\SystemTClass}             {{\SystemT_{\Class}}}
\newcommand{\Language}                 {\mathcal{L}}
\newcommand{\LanguageClass}            {\Language_{\Class}}
\newcommand{\comment}[1]{}
\newcommand{\proj}                     { {\mathsf{p}} }
\newcommand{\CupSem}                   { {\mathcal U} }
\newcommand{\funn}                   { {\Nat\rightarrow\Nat}}
\newcommand{\ifthen} [3]                        { {\mathsf{if}\ {#1}\ \mathsf{then}\ {#2}\ \mathsf{else}\ {#3} } }
\newcommand{\rec}                          {{\mathsf{R}}}
\newcommand{\ifn}                          {{\mathsf{if}}}
\newcommand{\suc}                      {{\mathsf{S}}}
\newcommand{\Phic}                        {{\mathsf{\Phi}}}
\newcommand{\upconst}[1]           {{\overline{#1}}}
\newcommand{\truth}[1]             {{\upchi}_{#1}}
\newcommand{\truthst}[2]             {{{#1}^{#2}}}
\newcommand{\truthstbot}[2]             {{({#1}^{\bot})^{#2}}}
\newcommand{\substitution} [1]         { {\overline{#1}} }
\newcommand{\dom}[2]          {{\mathsf{dom}_{#1}(#2)}}
\newcommand{\minu}                {{\mathsf{min}}}
\newcommand{\get}             {{\mathsf{get}}}
\newcommand{\mkupd}            {{\mathsf{mkupd}}}
\newcommand{\tkleene}   {{\mathscr{T}}}
\newcommand{\mrs}             {{\mathsf{mr}}}
\newcommand{\mrsf}             {{\Vdash}}
\newcommand{\empredicate}[1]    {{\mathsf{P}_{#1}}}
\newcommand{\skolemcon}  {{\mathcal{SC}}}
\newcommand{\code}[1]       {{\langle #1\rangle}}
\newcommand{\lev}[1]         {{\mathsf{lev}(#1)}}
\newcommand{\real}[1]       {{\boldsymbol{{\mathfrak{T}}}_{#1}}}
\newcommand{\freal}[1]       {{\boldsymbol{\mathfrak{F}}_{#1}}}
\newcommand{\sub}             {{\,\smallsetminus\,}}
\newcommand{\dm}[1]{\mathsf{def}(#1)}
\newcommand{\rs}[1]{\overset {#1}{\mapsto}}
\newcommand{\comp}[1]{\| #1 \|}
\newcommand{\redto}{\rightarrow}
\newcommand{\dlinea}{\leavevmode\hrule\vspace{1pt}\hrule\mbox{}}
\newcommand{\notreal}{\Vvdash\hspace{-2.8ex}\scriptsize \diagup}
\newcommand{\sbs}[1]{\overline{#1}}
    \newtheorem{notation}{Notation}
\newcommand{\tdef}[1]{\downarrow^{\hspace{-0.3ex}#1}}
\newcommand{\FSet}{\Upgamma}
\newcommand{\HAw}{\HA^{\omega}}
\newtheorem{theorem}{Theorem}
\newtheorem{lemma}{Lemma}
\newtheorem{proposition}{Proposition}
\newtheorem{definition}{Definition}
\title{Interactive Realizability and the elimination of Skolem functions in Peano Arithmetic}
\author{Federico Aschieri
\institute{Laboratoire PPS, \'equipe PI.R2, \\
Universit\'e Paris 7, INRIA$\And$CNRS}
\and
Margherita Zorzi\thanks{Supported by ANR
COMPLICE project (Implicit Computational Complexity, Concurrency and Extraction), ref.: ANR-08-BLANC-0211-01.}
\institute{Laboratoire d'Informatique de Paris-Nord \\
UMR CNRS 7030 \\
Institut Galil\'ee -- Universit\'e Paris-Nord}
}
\begin{document}

\maketitle

\begin{abstract}

We present a new syntactical proof that first-order Peano Arithmetic with Skolem axioms is conservative over Peano Arithmetic alone for arithmetical formulas.
This result -- which shows that the Excluded Middle principle can be used to eliminate Skolem functions -- has been previously proved by other techniques, among them the epsilon substitution method and forcing.  In our proof, we employ  Interactive Realizability, a computational semantics for Peano Arithmetic which extends Kreisel's modified realizability to the classical case.
\end{abstract}


\section{Introduction}

\vspace{-1.5ex}

For a long time it has been known that intuitionistic realizability can be used as a flexible  tool for obtaining a wealth of unprovability, conservativity and proof-theoretic results \cite{TroelstraRealizability,TroelstraHandBook}. As title of example, with Kreisel's modified realizability \cite{Kreiselm}, one can show the unprovability of Markov Principle in Heyting Arithmetic in all finite types ($\HAw$) and the conservativity of $\HAw$ with the Axiom of Choice ($\AC$) over $\HAw$ for negative formulas.
In both cases, one starts by showing that any formula provable in one of those systems can be shown to be realizable in $\HAw$.
In the first case, one proves that the realizability of Markov Principle implies the solvability of the Halting Problem, and concludes that Markov Principle is unprovable in $\HAw$. In the second, one exploits the fact that the assertion ``$t$ realizes $A$'' is exactly the formula $A$ when $A$ is  negative and concludes that $\HAw$ proves $A$.

The situation in classical logic has been very different: for a long time it did not exist any realizability notion suitable to interpret directly classical proofs, let alone proving independence or conservation results. However, recently several classical realizability interpretations have been put forward.  Among them: Krivine's classical realizability~\cite{Krivine}, which has been shown in~\cite{Krivine2} to yield striking unprovability results in Zermelo-Fraenkel set theory,
and Interactive realizability~\cite{AB,ABF,AschieriHAS,AschieriPA}, which has been shown in \cite{Aschierilearning,AschieriHAS} to provide conservation results for $\Pi_{2}^{0}$-formulas.

Being a tool for  extracting programs from proofs, it is however quite natural that Interactive realizability is capable of producing $\Pi_{2}^{0}$-conservativity results.
The aim of this paper is to prove  that Interactive realizability can as well be used to prove other conservativity results.
In particular, let us consider first-order classical  Peano Arithmetic $\PA$, which is $\HA+\EM$, where $\EM$ is the excluded middle over arithmetical formulas. Then we give a new syntactic proof that $\PA$ with the Skolem axiom scheme $\SK$ is conservative over $\PA$ for arithmetical formulas -- a result first syntactically proven by Hilbert and Bernays \cite{HilbertBernays} by means of the epsilon substitution method. The result is particularly interesting since it implies that classical choice principles can be eliminated by using the excluded middle alone. The structure of our proof resembles the pattern of the intuitionistic-realizability conservation proofs we have sketched above and allows to obtain a stronger result. Namely, we shall show that if an arithmetical formula $A$ is provable in $\HAw+\EM+\SK$, then the assertion ``$t$ realizes $A$'' is provable in $\HAw$ alone.  Afterwards, we shall show the provability in $\HAw+\EM$ of the assertion ``($t$ realizes $A$) implies $A$'' and thus conclude that $\HAw+\EM$ proves $A$. Since this latter system is conservative over $\PA$ for arithmetical formulas, we obtain the result.

In our opinion, there are at least two reasons our proof technique is interesting. As remarked by Avigad\cite{AvigadSkolem}, the methods based on the epsilon-method, Herbrand's Theorem or cut-elimination lead to an exponential increase in the size of the proof, when passing from a proof in $\HAw+\EM+\SK$ to a corresponding proof in $\HA+\EM$; instead, we conjecture that our transformation is polynomial. To the best of our knowledge, there is only another method that does equally well, which is Avigad's \cite{AvigadSkolem}. The technique of Avigad is related to ours since it uses the method of forcing, in which the conditions are finite approximations of the Skolem functions used in the proof. With forcing one avoids speaking about infinite non-computable objects (i.e. the Skolem functions) and can approximate the original proof. Avigad's method is very simple and elegant when there is only one Skolem function to eliminate, but it becomes more complicated and difficult to handle when dealing with several Skolem functions. In fact, a nesting of the notion of forcing together with a technical result about elimination of definitions become necessary and the method loses some intuitive appeal. Instead, the use of Interactive realizability allows to deal with all the Skolem functions at the same time, and we conjecture that the resulting proofs are much shorter than the ones obtained by forcing. Moreover, the notion of forcing as an approximation of model-theoretic truth is harder to come up with, and it is much more natural to talk about states and approximations when dealing with programs.

Secondly, the theory of Interactive realizability offers a uniform explanation of a number of different phenomena. Rather than proving each particular meta-theoretic result about classical Arithmetic with an ad-hoc technique, one employs a single methodology. For example, one may prove conservativity of $\PA$ over $\HA$ for $\Pi^{0}_{2}$-formulas by a negative translation followed by Friedman's translation \cite{Friedman}; one may extract from proofs terms of G\"odel's System $\SystemTG$ by realizability or functional interpretations \cite{Godeldialectica}; one may prove the result about the elimination of Skolem functions with forcing; one may extract from proofs strategies in backtracking Tarski games by analyzing sequent calculus proofs \cite{Coquand}; one may obtain a simple ordinal analysis of $\PA+\SK$ by using update procedures \cite{Avigad}. Instead, with the theory of Interactive realizability one obtains all the results above as a consequence of a single concept (see \cite{Aschierilearning,Aschierigames,AschieriPA}).

\vspace{-3ex}

\paragraph{Plan of the paper}
In Section \S \ref{section-TheTermCalculus} we review the term calculus $\SystemTClass$ in which Interactive realizers are written, namely an extension of G\"odel's system $\SystemTG$ plus Skolem function symbols for a countable collection of Skolem functions.
In Section \S \ref{section-ALearningBasedRealizability} we recall Interactive realizability, as described in \cite{AschieriPA}, a computational semantics for  $\HA^{\omega}+\EM+\SK$, an arithmetical system with functional variables which includes first-order classical Peano Arithmetic and Skolem axioms.
 In Section \S \ref{section-Conservativity} we use Interactive realizability to prove the conservativity of $\HAw+\EM+\SK$ over $\HAw+\EM$ for arithmetical formulas.
In Section \S \ref{section-formalization} we explain in more detail how to formalize the proofs of Section~\ref{section-Conservativity} in $\HAw+\EM$ and $\HA+\EM$.

\section{The Term Calculus $\SystemTClass$}\label{section-TheTermCalculus}

\vspace{-1.5ex}

In this section we follow \cite{AschieriPA} and recall the typed lambda calculi $\SystemT$ and $\SystemTClass$ in which interactive realizers are written. $\SystemT$ is an extension  of G\"odel's system $\SystemTG$ (as presented in Girard \cite{Girard}) with some syntactic sugar. The basic objects of $\SystemT$ are numerals, booleans, and its basic computational constructs are primitive recursion at all types, if-then-else, pairs, as in G\"odel's $\SystemTG$. $\SystemT$ also includes as basic objects  finite partial functions over $\NatSet$ and simple primitive recursive operations over them. $\SystemTClass$ is obtained from $\SystemT$ by adding on top of it a collection of Skolem function symbols $\Phic_{0}, \Phic_{1}, \Phic_{2}, \ldots,$ of type $\Nat\rightarrow \Nat$, one for each arithmetical formula. The symbols are  inert from the computational point of view and realizers are always computed with respect to some approximation of the Skolem maps represented by $
\Phic_{0}, \Phic_{1},\Phic_{2},\ldots$.

\subsection{Updates}

\vspace{-1.5ex}

In order to define $\SystemT$, we start by introducing the concept of ``update'', which is nothing but a finite partial function over $\NatSet$. Realizers of atomic formulas will return these finite partial functions, or ``updates'', as new pieces of information that they have learned about the Skolem function $\Phic_{0}, \Phic_{1},\ldots$. Skolem functions, in turn, are used as ``oracles'' during computations in the system $\SystemTClass$. Updates are new associations input-output that are intended to correct, and in this sense, to \emph{update}, wrong oracle values used in a computation.

\begin{definition}
[Updates and Consistent Union]
\label{definition-StateOfKnowledge}
We define:
\begin{enumerate}

\comment{item
A binary {\em predicate} of $\SystemTG$ is any closed normal term $P:\Nat^{2}\rightarrow \Bool$ of G\"odel's $\SystemTG$.

\item
We assume $\empredicate{0}, \empredicate{1}, \empredicate{2}, \ldots $ is an arbitrary enumeration of all binary predicates of $\SystemTG$.\\}

\item
An update set $U$, shortly an \emph{update}, is a finite set of triples of natural numbers representing a finite partial function from $\NatSet^{2}$ to $\NatSet$. \\

\vspace{-2ex}
\item
Two triples $(a,n,m)$ and $(a',n', m')$ of numbers are {\em consistent} if  ${a} = {a}'$ and $n=n'$ implies $m = m'$.
Two updates $U_1, U_2$ are consistent if $U_1 \cup U_2$ is an update.\\

\vspace{-2ex}
\item
$\UpdateSet$ is the set of all updates.\\

\vspace{-2ex}

\item
The {\em consistent union} $U_1\, \CupSem\, U_2$ of $U_1, U_2 \in \UpdateSet$ is $U_1 \cup U_2$ minus all triples of $U_2$ which are inconsistent with some triple of $U_1$.

\end{enumerate}
\end{definition}

The consistent union $U_1\, \CupSem\, U_2$ is an non-commutative operation: whenever a triple of $U_1$ and a triple of $U_2$ are inconsistent, we arbitrarily keep the triple of $U_1$ and we reject the triple of $U_2$, therefore for some $U_1, U_2$ we have $U_1\, \CupSem\, U_2 \not = U_2\, \CupSem\, U_1$. $\CupSem$ represents a way of selecting a consistent subset of $U_1 \cup U_2$, such that  $U_1 \CupSem U_2 = \emptyset \implies U_1 = U_2 = \emptyset$.

\subsection{The System $\SystemT$}

\vspace{-1.5ex}

 $\SystemT$  is formally described in figure \ref{fig:F}. Terms of the form $\ifn_{A}\,t_1\,t_2\,t_3$ will be sometimes written in the more legible form $\ifthen{t_1}{t_2}{t_3}$. A \emph{numeral} is a term of the form $\suc(\ldots \suc(0)\ldots)$. For every update $U\in\UpdateSet$, there is in $\SystemT$ a constant $\upconst{U}: \Update$, where $\Update$ is a new base type representing $\UpdateSet$. We write $\varnothing$ for $\upconst{\emptyset}$. In $\SystemT$, there are four operations involving updates (see figure \ref{fig:F}):

\begin{enumerate}

\item
The first operation is denoted by the constant $\minu: \Update\rightarrow \Nat$. $\minu$ takes as argument an update constant $\upconst{U}$; it returns the minimum numeral $a$ such that $(a,n,m) \in U$ for some $n, m \in \Nat$, if any exists; it returns $0$ otherwise.\\

\vspace{-2.5ex}

\item
The second operation is denoted by the constant $\get: \Update\rightarrow \Nat^{3}\rightarrow \Nat$. $\get$ takes as arguments an update constant $\upconst{U}$ and three numerals $a,n, l$; it returns $m$ if $(a,n,m) \in U$ for some $m \in \Nat$ (i.e. if $(a,n)$ belongs to the domain of the partial function $U$); it returns $l$ otherwise.\\

\vspace{-2.5ex}

\item
The third operation is denoted by the constant $\mkupd: \Nat^{3}\rightarrow \Update$. $\mkupd$ takes as arguments three numerals $a,n,m$ and transforms them into (the constant coding in $\SystemT$) the update $\{(a,n,m)\}$.\\

\vspace{-2.5ex}

\item The forth operation is denoted by the constant $\Cup: \Update^{2}\rightarrow \Update$. $\Cup$ takes as arguments two update constants and returns the update constant denoting their consistent union.

\end{enumerate}

We observe that the constants $\minu, \get, \mkupd, \upconst{U}, \Cup$, and the type $\Update$  are just syntactic sugar and may be avoided by coding finite partial functions into natural numbers. System $\SystemT$ may thus be coded in G\"odel's $\SystemTG$.
\begin{figure*}[!htb]

\vspace{-1ex}
\dlinea

\vspace{-2ex}

\footnotesize{
\begin{description}
\item[Types] \[\sigma, \tau ::=  \Nat\ |\ \Bool\ |\ \Update\ |\ \sigma\rightarrow \tau\ |\ \sigma\times \tau\  \] \item[Constants]\[c::=\rec_{\tau} \ |\ \ifn_{\tau} \ |\ 0\ |\ \suc\ |\ \True\ |\ \False \ | \ \minu\ | \ \get\ |\ \mkupd\ | \ \Cup\ |\ \upconst{U} \text{ ($\forall U\in \UpdateSet$)}\]
\item[Terms]\[t,u::=\ c\ |\ x^\tau\ |\ tu\ |\ \lambda x^\tau u\  |\ \langle t, u\rangle\ |\ \pi_0u\ |\ \pi_{1}u\]
\item[Typing Rules for Variables and Constants]\[
x^\tau:\ \tau\ |\ 0:\ \Nat\ |\
\suc:\ \funn\ |\ \True:\ \Bool\ |\
\False:\ \Bool\ |\
\upconst{U}:\ \Update \text{ (for every $U\in \UpdateSet$)}\ |\
\Cup:\ \Update\rightarrow \Update\rightarrow \Update\]
\[|\ \minu:\ \Update\rightarrow \Nat\ |\
\get:\ \Update\rightarrow\funn\rightarrow \funn\ |\
\mkupd:\ \Nat\rightarrow \funn \rightarrow \Update\]
\[
|\ \ifn_{\tau}:\ \Bool\rightarrow \tau\rightarrow  \tau\rightarrow \tau\ |\
\rec_{\tau}:\  \tau \rightarrow (\Nat \rightarrow (\tau \rightarrow \tau))\rightarrow \Nat\rightarrow \tau
\]
\item[Typing Rules for Composed Terms]
\[\AxiomC{$t: \sigma\rightarrow \tau$}
\AxiomC{$u: \sigma$}
\BinaryInfC{$tu: \tau$}
\DisplayProof\qquad
\AxiomC{$u: \tau$}
\UnaryInfC{$\lambda x^\sigma u: \sigma\rightarrow \tau$}
\DisplayProof \qquad
\AxiomC{$u: \sigma$}
\AxiomC{$t: \tau$}
\BinaryInfC{$\langle u, t\rangle: \sigma\times \tau$}
\DisplayProof\qquad
\AxiomC{$u: \tau_{0}\times \tau_{1}$}
\RightLabel{$i\in\{0,1\}$}
\UnaryInfC{$\pi_iu: \tau_i$}
\DisplayProof\]

\item[Reduction Rules] All the usual reduction rules for simply typed lambda calculus (see Girard \cite{Girard}) plus the rules for recursion, if-then-else and projections
\[\rec_{\tau}uv0\mapsto u\quad \rec_{\tau}uv\suc(t)\mapsto vt(\rec_{\tau}uvt)\quad
\ifn_{\tau}\,\True\, u\,v\mapsto u\quad \ifn_{\tau}\, \False\, u\, v\mapsto v\quad \pi_i\langle u_0, u_1\rangle\mapsto u_i, i=0,1\]
plus the following ones, assuming $a,n,m, l$ be numerals:
\[\begin{aligned}\minu\, \upconst{U}
 &\mapsto
\begin{cases}a &\mathsf{if}\ \exists m,n.\ (a,{n},{m}) \in U\land \forall(b,i,j)\in U.\ a\leq b \\
0 &\mathsf{otherwise}
\end{cases}&\qquad \upconst{U}_1\Cup \upconst{U}_2&\mapsto\ \upconst{U_1\,\CupSem\, U_2}\\
\get\, \upconst{U}\, a\,{n}\,l &\mapsto
\begin{cases}m &\mathsf{if}\ \exists m.\ (a,{n},{m}) \in U\\
l &\mathsf{otherwise}
\end{cases} &\qquad\mkupd\, a\, n\,m&\mapsto \upconst{\{(a,n,m)\}}
\end{aligned}\]
\end{description}}

\vspace{-1ex}

\dlinea
\caption{the extension $\SystemT$ of G\"odel's system $\SystemTG$}\label{fig:F}
\end{figure*}
\comment{System $\SystemT$ is obtained from system $\SystemTG$ adding a new atomic type and new operations on it.}

{As proved in \cite{AB,ABF}, $\SystemT$ is strongly normalizing, has the uniqueness-of-normal-form property and the following normal form theorem also holds.

\begin{lemma}[Normal Form Property for $\SystemT$] \label{lemma-normalform} Assume $A$ is either an atomic type or a product type. Then any closed normal term $t \in \SystemT$ of type $A$ is: a numeral ${n}:\Nat$, or a boolean $\True,\False:\Bool$, or an update constant $\upconst{U}:\Update$, or a constant of type $A$, or a pair $\langle u,v \rangle: B \times C$.
\end{lemma}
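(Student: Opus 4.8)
The plan is to argue by induction on the size of the closed normal term $t$, using the standard syntactic classification of normal forms: every normal term of $\SystemT$ is either an abstraction $\lambda x^\sigma u$, a pair $\langle u,v\rangle$, or a \emph{neutral} term, the latter being obtained from a head $h$ by a finite sequence of eliminations (applications to arguments and projections $\pi_0,\pi_1$), where $h$ is a variable or a constant. Since $t$ is closed, its head cannot be a variable, so a neutral $t$ has a constant head $c$. The two non-neutral cases are immediate from typing: an abstraction has an arrow type, which is neither atomic nor a product, so it is excluded; a pair has a product type, so if $A$ is a product it is exactly the last alternative of the statement, and if $A$ is atomic it is excluded by a type clash. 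It thus remains to analyse the neutral case.

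Next I would dispose of projections. If the outermost elimination of $t$ is a projection, then $t=\pi_i t'$ with $t'$ a closed normal term of a product type; by the induction hypothesis $t'$ must be a pair $\langle u,v\rangle$, whence $t$ is a projection redex, contradicting normality. More generally, whenever a projection sits above a pair or above a fully applied recursor or conditional, a redex is exposed, so no projection can survive. Consequently $t=c\,r_1\cdots r_k$ is a head constant applied to closed normal arguments $r_1,\dots,r_k$, and I would run through the constants. If $c$ is $0$, $\True$, $\False$ or $\upconst{U}$ it has an atomic type and admits no elimination, so $k=0$ and $t$ is a numeral, a boolean or an update constant. If $c=\suc$, then reaching the atomic type $\Nat$ forces $k=1$, and by the induction hypothesis $r_1:\Nat$ is a numeral, so $t=\suc(r_1)$ is again a numeral. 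For every remaining constant the term cannot be normal: $\rec_\tau r_1 r_2 r_3$ and $\ifn_\tau r_1 r_2 r_3$ are redexes once the induction hypothesis identifies the third (resp.\ first) argument as a numeral (resp.\ a boolean), while $\minu$, $\get$, $\mkupd$ and the consistent-union constant, once supplied the arguments their arity requires, have update-constant and numeral arguments by the induction hypothesis and hence match the left-hand sides of their reduction rules, producing a redex; any partial application of these constants would leave $t$ with an arrow type, again incompatible with $A$. Collecting the surviving cases yields exactly the list in the statement.

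The only mildly delicate point — and the step I expect to need the most care — is this very last one: to conclude that a fully applied occurrence of $\minu$, $\get$, $\mkupd$ or the union constant is a redex, one must check that the arguments meet the side conditions of the reduction rules (that the relevant argument is literally an update constant $\upconst{U}$ and that the numeric arguments are literally numerals). This is precisely where the induction hypothesis is invoked, and it is legitimate because all these arguments have atomic type ($\Nat$ or $\Update$), so they fall within the scope of the lemma and the induction is on strictly smaller terms. The same remark applies to the numeral and boolean arguments controlling $\rec_\tau$ and $\ifn_\tau$. Beyond this, the argument is a routine exhaustive case distinction over the finitely many constants of $\SystemT$, so I would not expect any genuine obstruction.
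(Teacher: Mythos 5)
The paper states this lemma without proof, deferring to \cite{AB,ABF}, so there is no in-paper argument to compare against; your proposal is the standard one and I find it correct. The induction on term size, the abstraction/pair/neutral trichotomy with the head forced to be a constant by closedness, the elimination of projections via the induction hypothesis on the (strictly smaller, product-typed) term beneath them, and the case split on constants in which the induction hypothesis supplies literal numerals, booleans and update constants to expose the $\rec$, $\ifn$, $\minu$, $\get$, $\mkupd$ and consistent-union redexes, is exactly the expected route, and you correctly flag the one point where the induction hypothesis is doing real work.
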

}

\subsection{The System $\SystemTClass$}

\vspace{-1.5ex}

We now define a classical extension of $\SystemT$, that we call $\SystemTClass$, with a Skolem function symbol for each arithmetical formula. The elements of $\SystemTClass$ will represent (non-computable) realizers.
\begin{definition}[The System $\SystemTClass$] \label{definition-TermLanguageL1}
Define
$\SystemTClass = \SystemT+ \skolemcon$, where $\skolemcon$ is a countable set of Skolem function constants,  each one  of type $\Nat\rightarrow \Nat$. We assume to have an enumeration $\Phic_{0}, \Phic_{1}, \Phic_{2}, \ldots$ of  all the constants in $\skolemcon$ (while generic elements of $\skolemcon$ will be denoted with letters $\Phic, \Uppsi, \ldots $).
\end{definition}

Every $\Phic\in \skolemcon$ represents a \emph{Skolem function} for some arithmetical formula $\exists y^\Nat\, A(x,y)$,  taking as argument a number $x$ and returning some $y$ such that $A(x,y)$ is true if any exists, and an arbitrary value otherwise.
 In general, there is no set of computable reduction rules for the constants in $\skolemcon$, and therefore no set of computable reduction rules for $\SystemTClass$. Each (in general, non-computable) term $t \in \SystemTClass$ is associated to a set $\{t[{s}]\ | s\in \SystemT, s: \Nat^{2}\rightarrow \Nat\} \subseteq \SystemT$ of computable terms we call its ``approximations'', one for each term $s: \Nat^{2}\rightarrow \Nat $ of $\SystemT$, which is thought as a sequence $s_{0}, s_{1}, s_{2}, \ldots$ of computable approximations of the oracles $\Phic_{0}, \Phic_{1}, \Phic_{2},\ldots$ (with $s_{i}$ we denote $s(i)$).

\begin{definition}[Approximation at State]\mbox{}
\begin{enumerate}
\item A \emph{state} is a closed term of type $\Nat^{2}\redto \Nat$ of $\SystemT$. If $i$ is a numeral, with $s_{i}$ we denote $s(i)$.\\

\vspace{-1ex}

\item Assume $t\in\SystemTClass$ and $s$ is a state. The ``approximation of $t$ at a state $s$'' is the term $t[s]$ of $\SystemT$ obtained from $t$ by replacing each constant $\Phic_i$ with $s_i$.
\end{enumerate}
\end{definition}

\section{Interactive Realizability for $\HA^{\omega}+\EM+\SK$}\label{section-ALearningBasedRealizability}

\vspace{-1.5ex}

In this section we introduce a notion of realizability based on interactive learning for $\HA^{\omega} +
\EM+\SK$, Heyting Arithmetic in all finite types (see e.g. Troelstra \cite{Troelstra}) plus Excluded Middle  and Skolem axiom schemes for all
arithmetical formulas.
Then we prove our main Theorem, the Adequacy
Theorem: {\em ``if a closed  formula is provable in
$\HA^{\omega}
+ \EM +\SK$, then it is realizable''}. \IfPaperState{For proofs we
refer to
\cite{ExtendedVersion}.}{}

We first define the formal system $\HA^{\omega} + \EM +\SK$. We represent atomic predicates of $\HA^{\omega} + \EM +\SK$ with  closed terms of $\SystemTClass$ of type $\Bool$. Terms of $\HA^{\omega} + \EM+\SK$ are elements of $\SystemTClass$ and thus may include the function symbols in  $\skolemcon$. We assume having in G\"odel's $\SystemTG$ some terms $ \Rightarrow_\Bool: \Bool\rightarrow \Bool\rightarrow\Bool, \neg_\Bool: \Bool \rightarrow \Bool, \lor_{\Bool}: \Bool\rightarrow \Bool\rightarrow\Bool\ldots$, implementing boolean connectives. As usual, we shall use infix notation: for example, we write $t_{1}\Rightarrow_{\Bool} t_{2}$ in place of $\Rightarrow_{\Bool} t_{1}t_{2}$ and similarly for the other connectives.\comment{ If $t_1, \ldots, t_n, t \in \SystemTG$ have type $\Bool$ and are made from free variables all of type $\Bool$, using boolean connectives, we say that $t$ is a tautological consequence of $t_1, \ldots, t_n$ in $\SystemTG$ (a tautology if $n=0$) if all boolean assignments making $t_1, \ldots, t_n$ equal to ${\True}$ in $\SystemTG$ also make $t$ equal to ${\True}$ in
$\SystemTG$.}

\subsection{Language of $\HA^{\omega} + \EM+\SK$}\label{section-HAomega}

\vspace{-2ex}

We now define the language of the arithmetical theory $\HA^{\omega} + \EM+\SK$.

\begin{definition}[Language of $\HA^{\omega} + \EM+\SK$]\label{definition-extendedarithmetic}
The language $\LanguageClass$ of $\HA^{\omega} + \EM+\SK$ is defined as follows.
\begin{enumerate}

\item
The terms of $\LanguageClass$ are all $t \in \SystemTClass$.\\

\vspace{-1ex}

\item
The atomic formulas of $\LanguageClass$ are all $Q\in \SystemTClass$ such that $Q:\Bool$.\\

\vspace{-1ex}

\item
The formulas of $\LanguageClass$ are built from atomic formulas of $\LanguageClass$ by the connectives $\lor,\land,\rightarrow, \sub, \forall,\exists$ as usual, with quantifiers possibly ranging over variables $x^{\tau}, y^{\tau}, z^{\tau},\ldots$ of arbitrary finite type $\tau$ of $\SystemTClass$.\\

\vspace{-1ex}

\item A formula of $\LanguageClass$ is said \emph{arithmetical} if it does not contain constants in $\skolemcon$ and all its quantifiers range over the type $\Nat$, i.e. it has one of the following forms: $\forall x^{\Nat} A, \exists x^{\Nat} A, A\lor B, A\land B, A\rightarrow B, A\sub B,  P$, with $A, B$ arithmetical and $P$ atomic formula of $\SystemT$.

\end{enumerate}


\end{definition}

We denote with  $\bot$ the atomic formula ${\False}$ and with $\lnot A$ the formula $A\rightarrow \bot$. $A\sub B$ is the dual of implication as in bi-intuitionistic logic and means ``$A$ and the opposite of $B$''.  If $F$ is a formula of $\LanguageClass$ in the free variables $x_{1}^{\tau_{1}},\ldots, x_{n}^{\tau_{n}}$ and $t_{1}:\tau_{1}, \ldots, t_{n}:\tau_{n}$ are terms of $\LanguageClass$, with $F(t_{1}, \ldots, t_{n})$ we shall denote the formula $F[t_{1}/x_{1}, \ldots, t_{n}/x_{n}]$. Sequences of variable $x_{1}^{\Nat}, \ldots, x_{k}^{\Nat}$ will be written as $\vec{x}$. We denote with $\code{\vec{x}}$ a term of $\SystemT$ in the free numeric variables $\vec{x}$ representing a injection of $\Nat^{k}$ into $\Nat$. Moreover, for every sequence of numerals $\vec{n}=n_{1},\ldots, n_{k}$, we define $\code{\vec{n}}:=\code{\vec{x}}[\vec{n}/\vec{x}]$ and assume that the function $\vec{n}\mapsto \code{\vec{n}}$ is a bijection.

The \emph{Excluded Middle axiom scheme} $\EM$ is defined as the set of all formulas of the form:
\[\forall \vec{x}^{\,\Nat}.\ A(\vec{x})\vee \neg A(\vec{x})\]
where $A$ is an arithmetical formula.

The \emph{Skolem axiom scheme} $\SK$ contains for each arithmetical formula $A(\vec{x},y)$
 an axiom: \[\forall \vec{x}^{\,\Nat}.\ \exists y^\Nat A(\vec{x},y)\rightarrow A(\vec{x}, \Phic\code{\vec{x}})\]
with $\Phic\in\skolemcon$. We assume that for every $\Phic\in\skolemcon$ there is in $\SK$ one and only one formula in which $\Phic$ occurs.  Such unique formula $A$ is said to be the \emph{formula associated to} $\Phic$ and $\Phic$ will be sometimes written as $\Phic_{A}$. If $s$ is a state and $\Phic_{i}=\Phic_{A}$, with $s_{A}$ we denote $s_{i}$ and with $\mkupd\,A\, u\, t$ we denote $\mkupd\, i\,u\,t$. We claim that the result of this paper would even hold if the formula $A$ was not required to be arithmetical, i.e. it was allowed to contain other Skolem functions previously defined by other Skolem axioms, possibility which in Avigad's case \cite{AvigadSkolem} complicates the elimination technique considerably.

\comment{ With $\lev{\Phic}$ we denote the number measuring the logical complexity of the formula $A$ associated to $\Phic$, i.e. the number of logical symbols occurring in $A$. If $s, s'$ are two states  and $n$ a numeral, we write $s\equiv s'\, \lev{n}$ if for every $A$ of complexity less than $n$, one has $s_{A}(m)=s'_{A}(m)$ for all numerals $m$. }

For each formula $F$ of $\LanguageClass$, its involutive negation $F^{\bot}$ is defined  by induction on $F$. First, we say that an atomic formula $P$ is positive if it is of the form $\lnot_{\Bool}\ldots \lnot_{\Bool} Q$, $Q$ is not of the form $\lnot_{\Bool} R$, and the number of $\lnot_{\Bool}$ in front of $Q$ is even.  Then we define:
\[\begin{aligned}
(\lnot_{\Bool}P)^{\bot}&= P\text{ (if $P$ positive)} &\qquad P^{\bot}&=\lnot_{\Bool}P \text{ (if $P$  positive)}\\
(A\wedge B)^{\bot}&=A^{\bot}\lor B^{\bot}&\qquad (A\vee B)^{\bot}&= A^{\bot} \land B^{\bot}\\
(A\rightarrow B)^{\bot}&=A\sub B&\qquad (A\sub B)^{\bot} &= A\rightarrow B\\
 (\forall x^{\tau} A)^{\bot}&=\exists x^{\tau}A^{\bot}&\qquad  (\exists x^{\tau} A)^{\bot}&= \forall x^{\tau} A^{\bot}\\
\end{aligned}
\]
As usual, one has $(F^{\bot})^{\bot}=F$.
\comment{We defined $\Rightarrow_\Bool:\Bool,\Bool\rightarrow \Bool$ as a term implementing implication, therefore, to be accurate, formulas of the form $\empredicate{a}(t,u) \Rightarrow_\Bool \empredicate{a}(t,\Phic_{a} t)$ are not an implication between two atomic formulas, but  they are equal to the single atomic formula $Q$, where \[ Q:=\   \Rightarrow_\Bool (\empredicate{a}tu)(\empredicate{a}t(\Phic_{a} t))\]}


We now fix  a special set of formulas  $\FSet$.
\begin{definition}[Set $\FSet$]\label{}
We fix  an arbitrary  finite set $\FSet$ of arithmetical formulas $A(\vec{x},y)$ of $\LanguageClass$.
\end{definition}

In the following, $\FSet$ will serve as a parameter in order to relativize the definitions of the realizability relation and of the ordering of states provided in \cite{AschieriPA}.
 The idea is that any given proof in the system $\HAw+\EM+\SK$ uses only a finite number of  instances of $\EM$ and $\SK$. Thus, it is enough to specialize the atomic case of the definition of realizability in such a way it refers only to the formulas in $\FSet$.
 The restriction  is necessary in order to avoid to speak about the truth of an infinite number of formulas, as done in \cite{AschieriPA}. When we shall have to interpret a particular proof $P$,
we will choose $\FSet$ as containing all the sub-formulas of the classical axioms appearing in $P$.

\subsection{Truth Value of a Formula in a State}

\vspace{-1.5ex}

The axioms of the system $\HA^{\omega}+\EM+\SK$ give a great computational power to the system $\SystemTClass$: thanks to the use of Skolem functions as oracles, one can ``compute'' by a term $\truth{F}$ of $\SystemTClass$ the truth value  of any arithmetical formula $F$. When one effectively evaluates $\truth{F}$ in a particular state $s$, we say that one computes \emph{the truth value of a formula $F$ in a state $s$}.

 \begin{definition}[Truth Value of a Formula $F$ in a State $s$]\label{definition-truthstate}
 For every arithmetical formula $F(\vec{x})$ of $\LanguageClass$ we define, by induction on $F$, a term $\truth{F}:\Bool$ of system $\SystemTClass$, with the same free variables of $F$:

 \vspace{-3ex}

\[\begin{aligned}
\truth{P}&=P,\ P\text{ atomic}\\
\truth{A\lor B}&=\truth{A}\lor_{\Bool} \truth{B}\qquad & \truth{\forall y^{\Nat} A}&=\truth{A}[\Phic_{ A^{\bot}} \code{\vec{x}}/y]\qquad &  \truth{A\sub B}&=\truth{A}\land_{\Bool} \truth{B^{\bot}}\\
\truth{A\land B}&=\truth{A}\land_{\Bool} \truth{B} \qquad& \truth{\exists y^{\Nat} A}&=\truth{A}[\Phic_{A}\code{\vec{x}}/y]\qquad &\truth{A\rightarrow B}&=\truth{A}\Rightarrow_{\Bool}\truth{B}
\end{aligned}
\]
We define $\truthst{F}{s}:=\truth{F}[s]$ and call it \emph{the truth value of $F$ in the state $s$}.
 \end{definition}
Intuitively, if $F(\vec{n})$ is a closed formula, our intended interpretation is:

\begin{enumerate}\item $\truth{F}(\vec{n})$ is a term of $\SystemTClass$ \emph{denoting}, in any standard model of $\HA^{\omega}+\EM+\SK$, the truth value of $F(\vec{n})$. \item  $\truthst{F}{s}(\vec{n})$ is a term of $\SystemT$  \emph{computing what would be} the truth value of $F(\vec{n})$ in some standard model of $\HA^{\omega}+\EM$ \emph{under the (possible false) assumption} that the interpretation mapping $\Phic_{i}$ to $s_{i}$ satisfies the axioms of $\SK$.\end{enumerate}

We remark that thus $\truthst{F}{s}(\vec{n})$ is only a \emph{conditional} truth value: if $\truthst{F}{s}(\vec{n})$ is not the correct truth value of $F(\vec{n})$ --  it may well happen -- then  the interpretation mapping $\Phic_{i}$ in $s_{i}$ does not satisfy the axioms of $\SK$. This subtle point is what makes possible learning in Interactive realizability: whenever a contradiction follows, realizers are able to effectively find counterexamples to the assertion that the interpretation mapping  $\Phic_{i}$ in  $s_{i}$ satisfies the axioms of $\SK$. We also observe that this way of computing the truth of a formula comes from the epsilon substitution method (see Avigad \cite{Avigad}, Mints et al. \cite{Mints}).

\comment{
The notion of truth in a state behaves as expected with respect to involutive negation.

\begin{proposition}[Truth in a State and Truth]\label{proposition-negation}
\comment{\begin{enumerate}
\item For every arithmetical formula $F$
\[\HA^{\omega}+\EM+\SK\vdash F\leftrightarrow \truth{F}\]
\item}
For every arithmetical formula $F(\vec{x})$, state $s$ and sequence of numerals $\vec{n}$,
$\truthst{F}{s}(\vec{n})=\False \iff \truthstbot{F}{s}(\vec{n})=\True$.
\end{proposition}

The truth of a formula $F$ in a state $s$ is determined by the approximations that $s$ gives of the Skolem functions of strictly lower level than the logical complexity of $A$.

\begin{proposition}\label{proposition-wf}
Let $F(\vec{x})$ be any arithmetical formula of logical complexity $m$, $\vec{n}$ be a sequence of numerals  and $s, s'$ be states such that $s\equiv s'\, \lev{m}$. Then $\truthst{F}{s}(\vec{n})=\truthst{F}{s'}(\vec{n})$.
\end{proposition}
}
 Every state $s$ is considered as an \emph{approximation} of the Skolem functions denoted by the constants of $\skolemcon$: for each formula $A$, $s_{A}$ may be a correct approximation of  $\Phic_{A}$ on some arguments, but wrong on other ones. More precisely,  we are going to consider the set $\dm{s}$ of the pairs $(i,\code{\vec{n}})$ such that $\Phic_{i}=\Phic_{A}$ and $A\in \FSet\Rightarrow\exists y^{\Nat} A(\vec{n},y)\redto{A}(\vec{n}, s_{i}\code{\vec{n}})$ is true as the real ``domain'' of $s$, representing the set of arguments at which $s_{i}$ is surely a correct approximation of $\Phic_{i}$, in the sense that $s_{i}$ returns an appropriate witness if any exists.
 \comment{On the other hand, we also need to consider an approximated domain $\dom{}{s}$ which comprises all the pairs $(i,\code{\vec{n}})$ such that $\Phic_{i}=\Phic_{A}$ and $A\in \FSet \Rightarrow\truthst{A}{s}(\vec{n}, s_{i}\code{\vec{n}})=\True$ (observe that the truth of $A$ is now approximated in $s$).
  We point out that in both cases if $\Phic_{i}=\Phic_{A}$ and $A\notin \FSet$, then trivially $(i,\code{\vec{n}})\in \dm{s}, \dom{}{s}$.
 }
 We point out that if $\Phic_{i}=\Phic_{A}$ and $A\notin \FSet$, then trivially $(i,\code{\vec{n}})\in \dm{s}$.
 The choice is made just for technical convenience, since one is not interested in the behaviour of $s$ outside $\FSet$.  We also define an ordering between states: we say that $s'\geq s$ if, intuitively, $s'$ is at least as good an approximation as $s$. Thus, we ask that if $s$ is a correct approximation at argument $(i,\code{\vec{n}})$ also $s'$ is and in particular $s'_{i}\code{\vec{n}}=s_{i}\code{\vec{n}}$.

  \begin{definition}[Domains, Ordering between States\comment{Sound Updates}]\label{definition-soundupdates}\mbox{}
\begin{enumerate}

\item We define
$$\dm{s}=\{(i,\code{\vec{n}})\;|\; \Phic_i=\Phic_{A} \text{ and $(A\in\FSet\Rightarrow \exists y^{\Nat} A(\vec{n},y)\redto A(\vec{n}, s_{i}\code{\vec{n}})$}\}$$ where $i$ and $\vec{n}$ range over numerals and sequences of numerals.\\
\item Let $s$ and $s'$ be two states. We define $s'\geq s$ if and only if for all $(i,\code{\vec{n}})$, $(i,\code{\vec{n}})\in\dm{s}$ implies  $s_i\code{\vec{n}}= s'_i\code{\vec{n}}$.

\comment{
\item Given an update ${U}$ and a state $s$, we define $\dom{s}{U}$ as  the set of pairs of numerals $(i,\code{\vec{n}})$ such that $A(\vec{x},y)$ is the formula associated to $\Phic_{i}$, $A\in \FSet$,  $(i,\code{\vec{n}}, m)\in U$ and $\truthst{A}{s}(\vec{n},m)=\True$. $U$ is said to be \emph{sound in the state $s$} if $(i,\code{\vec{n}}, m)\in U$  implies  $(i,\code{\vec{n}})\in\dom{s}{U}$.\\
\item Similarly, if $s$ is a state, we denote with $\dom{}{s}$  the set of pairs of numerals $(i,\code{\vec{n}})$ such that  $A(\vec{x},y)$ is the formula associated to $\Phic_{i}$, $s_{i}\code{\vec{n}}=m$ and $A\in \Gamma \Rightarrow \truthst{A}{s}(\vec{n},m)=\True$.
}
\end{enumerate}

\end{definition}

We remark that by definition, $s'\geq s$ implies $\dm{s'}\supseteq\dm{s}$ and that thanks to the restriction to $\FSet$ the relation $s'\geq s$ is arithmetical, because the condition $(i,\code{\vec{n}})\in\dm{s}$ is non-trivial only for finitely many $i$.  From now onwards, for every pair of terms $t_1, t_2$ of system $\SystemT$, we shall write $t_1=t_2$ if they are the same term modulo the equality rules corresponding to the reduction rules of system $\SystemT$ (equivalently, if they have the same normal form).

\subsection{Interactive Realizability}

\vspace{-1.5ex}

For every formula $A$ of $\LanguageClass$, we  now define what type $|A|$ a realizer of $A$ must have.
\begin{definition}[Types for realizers]
\label{definition-TypesForRealizers} For each
 formula $A$ of $\LanguageClass$ we define a type $|A|$ of $\SystemTClass$ by
induction on $A$:
\[\begin{aligned} |P|&=\Update, \text{ if $P$ is atomic}\\
|A\wedge B|&=|A|\times |B| &\qquad |\exists x^{\tau} A|&= \tau\times |A|\qquad & |A\sub B|&= |A| \times |B^{\bot}|\\
|A\vee B|&= \Bool\times (|A| \times|B|) &\qquad |\forall x^{\tau} A|&=\tau\rightarrow |A|\qquad &|A\rightarrow B|&=|A|\rightarrow |B|
\end{aligned}
\]
\end{definition}

Let now $\proj_0:=\pi_0 : \sigma_0 \times (\sigma_1 \times \sigma_2) \rightarrow \sigma_0 $, $\proj_1:=\pi_0\pi_1 : \sigma_0 \times (\sigma_1 \times \sigma_2) \rightarrow \sigma_1$ and $\proj_2:=\pi_1\pi_1 : \sigma_0 \times (\sigma_1 \times \sigma_2) \rightarrow \sigma_2$ be the three canonical projections from $\sigma_0 \times (\sigma_1 \times \sigma_2)$. We define the realizability relation $t\Vvdash F$, where $t \in \SystemTClass$, $F \in \LanguageClass$ and $t:|F|$.

\begin{definition}[Interactive Realizability]
\label{definition-IndexedRealizabilityAndRealizability}
Assume $s$ is a state, $t$ is a closed term of $\SystemTClass$, $F \in \LanguageClass$ is a closed formula, and $t:|F|$. We define first the relation $t\Vvdash_s F$ by induction and by cases according to the form of $F$:

\begin{enumerate}
\item
$t\Vvdash_s Q$ for some atomic $Q$ if and only if $\upconst{U} = t[s]$ implies:
\begin{itemize}
\item
for every  $(i, \vec{n}, m)\in U$,  $\Phic_i=\Phic_A$ for some $A\in\FSet$, and $\truthst{A}{s}(\vec{n}, s_i \code{\vec{n}})=\False$ and $\truthst{A}{s}(\vec{n},m)=\True$.

\item
$\upconst{U}  = \varnothing$  implies $Q[{s}]={\True}$ \\

\end{itemize}

\vspace{-1ex}

\item
$t\Vvdash_s{A\wedge B}$ if and only if $\pi_0t \Vvdash_s{A}$ and $\pi_1t\Vvdash_s{B}$\\

\vspace{-1ex}

\item
$t\Vvdash_s {A\vee B}$  if and only if either $\proj_0t[{s}]={\True}$ and $\proj_1t\Vvdash_s A$, or $\proj_0t[{s}]={\False}$ and $\proj_2t\Vvdash_s B$ \\

\vspace{-1ex}

\item
$t\Vvdash_s {A\rightarrow B}$ if and only if for all $u$, if $u\Vvdash_s{A}$,
then $tu\Vvdash_s{B}$\\

\vspace{-1ex}

\item $t\Vvdash_{s} A\sub B$ if and only if $\pi_{0}t \Vvdash_{s} A$ and $\pi_{1}t\Vdash_{s} B^{\bot}$\\

\vspace{-1ex}

\item
$t\Vvdash_s {\forall x^{\tau} A}$ if and only if for all closed terms $u:\tau$ of $\SystemT$,
$tu\Vvdash_s A[{u}/x]$\\
\item

\vspace{-1ex}

$t\Vvdash_s \exists x^{\tau} A$ if and only for some closed term $u:\tau$ of $\SystemT$, $\pi_0t[{s}]= u$  and $\pi_1t \Vvdash_s A[{u}/x]$\\

\end{enumerate}

\vspace{-1ex}
We define $t\Vvdash F$ if and only if for all states $s$ of $\SystemT$, $t\Vvdash_s F $.

\end{definition}

The ideas behind the definition of $\Vvdash_s$ in the case of $\HA^{\omega}+\EM+\SK$ are those we already explained in
\cite{AschieriPA}. A realizer is a term $t$ of $\SystemTClass$, possibly containing some non-computable Skolem function of $\skolemcon$; if such a function was computable, $t$ would be an intuitionistic realizer. Since in general $t$ is not computable, we calculate its approximation $t[s]$ at state $s$. $t$ is an intelligent, self-correcting program, representing a proof/construction depending on the state $s$.  The realizer \emph{interacts} with the environment, which may provide a counter-proof, a counterexample invalidating the current construction of the realizer. But the realizer is always able to turn such a negative outcome into a positive information, which consists in some new piece of knowledge learned about some Skolem function $\Phic_{i}$.

The next proposition tells that realizability at state $s$ respects the notion of equality of $\SystemTClass$  terms, when the latter is  relativized to state $s$. That is, if two terms are equal at the state $s$, then they realize the same formulas in the state $s$.

\begin{proposition}[Saturation]\label{proposition-saturation}
If $t_{1}[s]=t_{2}[s]$ and $u_{1}[s]=u_{2}[s]$, then $ t_{1}\Vvdash_s B[u_{1}/x]$ if and only if $t_{2}\Vvdash_s B[u_{2}/x]$.

\end{proposition}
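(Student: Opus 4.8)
The plan is to prove the equivalence by induction on the number of logical connectives and quantifiers of $B$. Since the two hypotheses $t_1[s]=t_2[s]$ and $u_1[s]=u_2[s]$ are symmetric under swapping the subscripts $1$ and $2$, it suffices to establish a single direction, say that $t_1\Vvdash_s B[u_1/x]$ implies $t_2\Vvdash_s B[u_2/x]$; the converse is then the symmetric instance. The engine of the whole argument is one commutation fact about approximation: because a state $s$ is a closed term containing no occurrence of the variable $x$, approximation at $s$ commutes both with substitution for $x$ and with the term constructors of $\SystemTClass$. Concretely $v[w/x][s]=v[s][w[s]/x]$, so from $u_1[s]=u_2[s]$ I obtain $v[u_1/x][s]=v[u_2/x][s]$ for every term $v$, and in particular $B[u_1/x]$ and $B[u_2/x]$ carry the same boolean truth value at $s$; dually, from $t_1[s]=t_2[s]$ I obtain $(\pi_j t_1)[s]=(\pi_j t_2)[s]$, $(\proj_j t_1)[s]=(\proj_j t_2)[s]$ and $(t_1u)[s]=(t_2u)[s]$, because these operations contain no Skolem constant and equality modulo the reduction rules of $\SystemT$ is a congruence. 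These are the only properties of approximation that the clauses of Definition~\ref{definition-IndexedRealizabilityAndRealizability} actually exploit.

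For the base case $B=Q$ atomic I would argue as follows. Whether $t_1[s]$ normalises to some update constant $\upconst{U}$ depends only on $t_1[s]=t_2[s]$, so the same $U$ is produced on both sides; the first bullet of the atomic clause is a condition on the triples of $U$ and on the state $s$ alone, hence entirely insensitive to which of $Q[u_1/x]$, $Q[u_2/x]$ is considered; and the second bullet, in the case $U=\emptyset$, reduces precisely to $Q[u_1/x][s]=Q[u_2/x][s]$, which is an instance of the commutation fact above. The cases $B=C\wedge D$, $B=C\vee D$ and $B=C\sub D$ are then immediate: I push the appropriate projection $\pi_j$ or $\proj_j$ through, use that it has the same approximation on $t_1$ and $t_2$, and apply the induction hypothesis to the strictly smaller immediate subformulas, applying it also to $D^{\bot}$ in the $\sub$ case (substitution commutes with the involutive negation).

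For the quantifier cases I would use that, after substituting a witness $u$ for the bound variable $y$, the formula $C[u/y]$ has the same complexity as $C$ and hence is strictly smaller than $\exists y^{\Nat} C$ or $\forall y^{\Nat} C$, so the induction hypothesis applies to it. In the existential case the witness is read off from $\pi_0 t_1[s]=\pi_0 t_2[s]$, so it is literally the same closed term $u$ for both sides, and the induction hypothesis on $C[u/y]$ with realizers $\pi_1 t_1,\pi_1 t_2$ and substituted terms $u_1,u_2$ closes the case. The universal case quantifies over all closed $u:\tau$ of $\SystemT$ and applies the induction hypothesis to $C[u/y]$ with realizers $t_1u,t_2u$, whose approximations agree by the commutation fact.

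The case that requires the most care is implication (and, symmetrically, the antecedent-side reasoning that recurs inside $\forall$), where $\Vvdash_s$ is defined by a universal quantification over \emph{all} realizers $u$. Here, given $u\Vvdash_s C[u_2/x]$, I must first invoke the induction hypothesis on $C$ with the \emph{same} realizer $u$ but the two different substitutions to deduce $u\Vvdash_s C[u_1/x]$, then apply the main hypothesis to get $t_1u\Vvdash_s D[u_1/x]$, and finally transport this to $t_2u\Vvdash_s D[u_2/x]$ by the induction hypothesis on $D$, using $(t_1u)[s]=(t_2u)[s]$. I expect the main obstacle to be bookkeeping rather than conceptual: making the commutation of approximation with substitution fully precise in the quantifier cases, where one substitutes a witness for the bound variable and must rename bound variables so that the substitution for $x$ and the substitution for $y$ genuinely commute. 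Once that commutation lemma is pinned down, every clause of the definition goes through mechanically.
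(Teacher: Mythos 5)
Your proposal is correct and follows exactly the route the paper takes: the paper's entire proof is the single line ``By straightforward induction on $A$'', and your argument is precisely that induction on the formula, worked out in full (with the right generalization to induction on the number of connectives so that the quantifier cases, where a witness is substituted for the bound variable, remain covered). The commutation of approximation at a closed state with substitution, and the three-step transport in the implication case, are exactly the points one must check for the paper's claim of ``straightforwardness'' to be justified.
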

{\textit{Proof}.
By straightforward induction on $A$.
}

\comment{
\begin{proposition}[Truth and Realizability in a State]\label{proposition-truthstate} Let $F(\vec{x})$ be an arithmetical formula. There exist two terms $\real{F}(\vec{x})$ and $\freal{F}(\vec{x})$ of $\SystemTClass$ such that for all numerals $\vec{n}$ and state $s$
\begin{equation}\label{eq1}\truthst{F}{s}(\vec{n})=\True \implies \real{F}(\vec{n})\Vvdash_{s} F(\vec{n})\end{equation}
\begin{equation}\label{eq2}\truthst{F}{s}(\vec{n})=\False \implies \freal{F}(\vec{n})\Vvdash_{s} \lnot F(\vec{n})\end{equation}
\end{proposition}
}

In the following, we use a standard natural deduction system for $\HA^{\omega}+\EM+\SK$, together with a term assignment  in the spirit of Curry-Howard correspondence for classical logic. We denote with  $\HA^{\omega} + \EM +\SK\vdash t: A$ the derivability relation in that system, where $t$ is a term of $\SystemTClass$ and $A$ is a formula of $\LanguageClass$. All details can be found in ~\cite{ABF},~\cite{AschieriPA}.

The main theorem about Interactive realizability is the Adequacy Theorem: if a closed formula is provable in $\HA^{\omega} + \EM +\SK$, then it is realizable (see \cite{AschieriPA} for a proof).

\begin{theorem}[Adequacy Theorem]\label{Realizability Theorem}
If $A$ is a closed formula such that $\HA^{\omega} + \EM +\SK\vdash t: A$ and all the subformulas of the instances of $\EM$ and $\SK$ used in the derivation belong to $\FSet$, then $t\Vvdash A$.
\end{theorem}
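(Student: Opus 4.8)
The plan is to prove the theorem by induction on the derivation $\HAw + \EM + \SK \vdash t : A$, after first strengthening the statement so that it applies to open derivations. Concretely, I would show: for every derivation of $x_1 : A_1, \ldots, x_k : A_k \vdash t : A$ all of whose $\EM$- and $\SK$-instances have their subformulas in $\FSet$, for every state $s$, and for all closed terms $u_1, \ldots, u_k$ of $\SystemTClass$ with $u_i \Vvdash_s A_i$, one has $t[\vec{u}/\vec{x}] \Vvdash_s A$. Specialising to a closed $A$ and quantifying over all states $s$ then yields $t \Vvdash A$ by the last clause of Definition~\ref{definition-IndexedRealizabilityAndRealizability}. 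Free first-order variables are treated in the same way, by substituting closed terms of $\SystemT$ and invoking Proposition~\ref{proposition-saturation} whenever a substituted term has to be replaced by one equal to it at the state $s$.

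Before running the induction I would isolate the computational core as an auxiliary lemma linking truth-in-a-state to realizability-in-a-state: by simultaneous induction on an arithmetical formula $F(\vec{x})$, construct terms $\real{F}$ and $\freal{F}$ of $\SystemTClass$ such that, for every state $s$ and numerals $\vec{n}$,
\[
\truthst{F}{s}(\vec{n}) = \True \implies \real{F}(\vec{n}) \Vvdash_s F(\vec{n}),
\qquad
\truthst{F}{s}(\vec{n}) = \False \implies \freal{F}(\vec{n}) \Vvdash_s \neg F(\vec{n}).
\]
The base case is the atomic clause of the realizability definition (a true atom is realized by $\varnothing$, while the negation of a false atom is realized by forwarding the necessarily-nonempty update carried by any realizer of the atom); the inductive cases follow the boolean definition of $\truth{F}$, with the quantifier steps using the Skolem constants exactly as in Definition~\ref{definition-truthstate}. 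The \emph{delicate} half is $\freal{F}$: to realize $\neg F(\vec{n})$ when $\truthst{F}{s}(\vec{n}) = \False$, it must transform an arbitrary realizer of $F(\vec{n})$ at $s$ into a realizer of $\bot$, that is, into a nonempty, sound \emph{correcting update}. This is precisely the point where learning occurs, and it is verified by pushing the induction hypothesis through the clauses for $\land, \lor, \rightarrow, \forall, \exists$ and $\sub$.

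With this lemma in hand, most cases of the main induction are routine manipulations of the realizability clauses: the variable case is the hypothesis $u_i \Vvdash_s A_i$; the introduction and elimination rules for $\land, \lor, \rightarrow, \forall, \exists, \sub$ follow by unwinding the matching clause of Definition~\ref{definition-IndexedRealizabilityAndRealizability} (pairing and projection, the boolean disjunction tag, application, and witness extraction), with Proposition~\ref{proposition-saturation} covering the substitutions in $\forall$-elimination and the equality axioms; the Peano axioms hold in every state and are realized by $\varnothing$ or by $\real{}$ of the relevant atom; and the induction schema is realized by the recursor $\rec$, whose adequacy is checked by a meta-level induction on its numeral argument. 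The two genuinely classical cases are $\EM$ and $\SK$. For $\EM$, the realizer of $\forall \vec{x}.\, A(\vec{x}) \vee \neg A(\vec{x})$ uses the boolean $\truth{A}(\vec{x})$ as its disjunction tag, so that at state $s$ the tag is $\truthst{A}{s}(\vec{n})$; the truth--realizability lemma then supplies $\real{A}(\vec{n})$ in the $\True$ branch and $\freal{A}(\vec{n})$ in the $\False$ branch, which is exactly what the disjunction clause requires. For $\SK$, a realizer of $\forall \vec{x}.\, \exists y\, A(\vec{x}, y) \rightarrow A(\vec{x}, \Phic\code{\vec{x}})$ takes a realizer of the premise yielding a witness $m$ together with a realizer of $A(\vec{n}, m)$; the latter either already exhibits a correcting update, which is forwarded, or certifies $\truthst{A}{s}(\vec{n}, m) = \True$. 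In the second case the realizer compares this with the oracle value: if $\truthst{A}{s}(\vec{n}, s_A\code{\vec{n}}) = \True$ the conclusion holds in $s$ and is realized by $\real{A}$, and otherwise it returns $\mkupd\, A\, \code{\vec{n}}\, m$, which satisfies the atomic clause because $\truthst{A}{s}(\vec{n}, m) = \True$ while $\truthst{A}{s}(\vec{n}, s_A\code{\vec{n}}) = \False$.

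The main obstacle, I expect, is the truth--realizability lemma, and within it the construction and verification of $\freal{F}$: proving that a realizer of a formula that is \emph{false in the state $s$} can always be turned into a nonempty, sound correcting update requires tracking, through every connective, the dual behaviour of realizers --- certifying truth-in-$s$ by emitting the empty update on the one hand, and surfacing a genuine oracle error by emitting a nonempty update on the other. Once this interplay is established uniformly, the $\EM$ and $\SK$ cases reduce to direct applications of it, and the remaining cases of the induction are bookkeeping.
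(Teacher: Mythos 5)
Your proposal is correct and follows essentially the same route as the intended proof: the paper itself defers the argument to \cite{AschieriPA}, but the auxiliary lemma you isolate --- terms $\real{F}(\vec{x})$ and $\freal{F}(\vec{x})$ with $\truthst{F}{s}(\vec{n})=\True \implies \real{F}(\vec{n})\Vvdash_{s} F(\vec{n})$ and $\truthst{F}{s}(\vec{n})=\False \implies \freal{F}(\vec{n})\Vvdash_{s} \lnot F(\vec{n})$ --- is precisely the ``Truth and Realizability in a State'' proposition underlying that proof, and your handling of $\EM$ (tag $=\truth{A}$) and $\SK$ (forward a correcting update or emit $\mkupd\,A\,\code{\vec{n}}\,m$) is the standard one. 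The only detail to tighten is that for non-atomic $A$ the update produced in the $\SK$ case realizes $\bot$ and must be lifted to a realizer of $A(\vec{n},\Phic\code{\vec{n}})$ by the ex-falso term $\rexfalso{A}$, rather than ``satisfying the atomic clause'' directly.
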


\section{Conservativity of $\HA^{\omega}+\EM+\SK$ over $\HA^{\omega}+\EM$ ($\HA+\EM$)}\label{section-Conservativity}

\vspace{-1.5ex}

The aim of  this section is to use Interactive realizability in order to prove that for every arithmetical formula $A$,  if $\HAw+\EM+\SK\vdash A$ then  $\HAw+\EM\vdash A$ ($\HA+\EM\vdash A$).
Since we know by the Adequacy Theorem~\ref{Realizability Theorem} that  $\HAw+\EM+\SK\vdash A$ implies $\HA^{\omega} + \EM +\SK\vdash t: A$ and $\HAw$ proves $t\Vvdash A$, our goal is to show in $\HA^{\omega} + \EM$ that $t\Vvdash A$ implies $A$.

The intuitive reason this latter result is true is the following: one can always find an approximation $s$ of the Skolem functions of $t$ which is good enough to contain all the information needed by $t$ to compute the \emph{true} witnesses for $A$ against any particular purported counterexample.
The idea is that one has only to collect finitely many values of each Skolem function called during the execution of the program represented by $t$. To this end, it suffices to invoke the excluded middle a number of times which, intuitively, can be expressed  in a proof formalizable in $\HAw+\EM$. This is possible because   $\HAw+\EM$ is strong enough to prove the normalization of each term $t$ of $\SystemTClass$ with respect to any interpretation of its Skolem functions.
Finally, if there existed a counterexample to $A$, it would be possible to falsify the construction of the realizer $t$ in the state $s$. Since $t$ is a self-correcting program, it would be able to correct one of the values of $s$ it has used in the computation of some witness for $A$. But $s$ is constructed as to be correct on all the values used by $t$, which entails a contradiction.

For example, let $A=\exists x^{\Nat}\forall y^{\Nat}\exists z^{\Nat} P(x,y,z)$. Then one can find a state $s$ which contains all the values of the Skolem functions needed to compute $n=\pi_{0}t[s]$. Suppose a counterexample $m$ to the formula $\forall y^{\Nat}\exists z^{\Nat} P(n,y,z)$ existed. Then one can find a state $s'\geq s$ which contains all the values of the Skolem functions needed to compute $l=\pi_{0}\left((\pi_{1}t)m\right)[s']$. Now, we would have that $P(n,m,l)$ is false; thus, $\pi_{1}\left((\pi_{1}t)m\right)[s']$ would be equal to some update $\upconst{U}$ containing some corrections to $s'$. We shall show that this will not be the case, and the intuitive reason is that $s'$ can be chosen as to be correct everywhere it is needed.

We now elaborate our argument. We start with a definition axiomatizing the informal concept that a state $s$ contains all the information needed to compute the normal form of a term $t$ of ground type. Namely, if for every $s'$ extending $s$ the evaluation of $t$ in the state $s'$ gives the same result obtained evaluating $t$ in $s$, then we may assume all the relevant information is already in $s$.

\begin{definition}[Definition of a term in a state $s$]\label{definition-defstate} For every state $s$ and  term $t$ of $\SystemTClass$ of atomic type, we define  $t\tdef{s}$ (and we say ``$t$ is defined in $s$'') as the statement: for all states $s'\geq s$, $t[s']=t[s]$.
\end{definition}

\textbf{Remark}. There is another, perhaps more intuitive way to express the concept of ``being defined in the state $s$''.
For every state $s$ we may define a binary reduction relation $\rs{s}\,\subseteq \SystemTClass\times\SystemTClass$ as follows:
$t\rs{s}u$ if either $t\mapsto u$ in $\SystemTClass$ or $u$ is obtained from $t$ by replacing one of its subterms $\Phic_{i}(n)$  with a numeral  $m=s_{i}(n)$ such that $(i,n)\in\dm{s}$.
Then one could say that $t$ is defined in $s$ if $t\rs{s}a$ where $a$ is  either a numeral, a boolean or an update. Though this approach works well, it is unsuitable to be directly formalized in $\HAw$, because in that system one cannot express this syntactical reasoning on terms.

We now define for every type $\tau$ a set of ``computable'' terms of type $\tau$ by means of the usual Tait-style computability predicates \cite{Tait}. In our case, following the approach of the previous discussion, we consider a term $t$ of ground type to be computable if for every state $s$, one can find a state $s'\geq s$ such  that $t$ is defined in $s'$. The notion is lifted to higher types as usual.

\begin{definition}[Computable terms]\label{definition-Computability}\mbox{}

For every type $\tau$ of $\SystemTClass$, we define a set of closed terms of $\SystemTClass$ of type $\tau$ as follows:

\begin{itemize}
\item $\comp{\Nat}$=$\{t:\Nat\;|$  \mbox{for all states  $s$ there  is a state  $s'\geq s$ such that $t\tdef{s'}$} $\}$ \\

\vspace{-1ex}

\item $\comp{\Bool}$=$\{t:\Bool\;|$  \mbox{for all states  $s$ there  is a state  $s'\geq s$ such that  $t\tdef{s'}$} $\}$ \\

\vspace{-1ex}

\item $\comp{\Update}$=$\{t:\Update\;|$  \mbox{for all states  $s$ there  is a state  $s'\geq s$ such that  $t\tdef{s'}$} $\}$ \\

\vspace{-1ex}

\item $\comp{\tau\redto \sigma}$=$\{t\;|\; \forall u\in\comp{\tau}\hspace{-3pt}. \;tu\in\comp{\sigma}\} $ \\

\vspace{-1ex}

\item $\comp{\tau\times \sigma}$=$\{t\;|\; \pi_{0}t\in\comp{\tau} \mbox{and } \pi_{1}t\in\comp{\sigma}\} $

\end{itemize}
\end{definition}

\comment{
\begin{proposition}[Monotonicity]\label{prop:monotonicity}
For all states $s$, if $t\tdef{s}$ and $s'\geq s$, then $t\tdef{s'}$.
\end{proposition}
\begin{proof}
Let $s$ be a state and $s'\geq s$. Suppose that $s''\geq s'$. Then $s''\geq s$ and since $t\tdef{s}$, we have $t[s'']=t[s]$, $t[s']=t[s]$ and therefore $t[s'']=t[s']$. We conclude $t\tdef{s'}$.
\end{proof}

}

In order to show that every term $t$ in $\SystemTClass$ is computable, as usual we need to prove that  the set of computable terms is saturated with respect to some suitable relation. In our case, two terms are related if they are equal in all states greater than some state.

\begin{lemma}\label{lemma:saturation}
 For every term $t:\rho$ of $\SystemTClass$, if for every state $s$ there exists a state $s'\geq s$ and $u\in\comp{\rho}$ such that for all state $s''\geq s'$, $t[s'']=u[s'']$, then $t\in\comp{\rho}$.
\end{lemma}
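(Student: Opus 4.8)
The plan is to prove the lemma by induction on the type $\rho$, exploiting at the ground case the fact that the ordering $\geq$ on states is a preorder. Reflexivity is immediate, and transitivity follows from the remark that $s'\geq s$ implies $\dm{s'}\supseteq\dm{s}$: if $s''\geq s'\geq s$ and $(i,\code{\vec{n}})\in\dm{s}$, then $(i,\code{\vec{n}})\in\dm{s'}$, so $s_i\code{\vec{n}}=s'_i\code{\vec{n}}=s''_i\code{\vec{n}}$, whence $s''\geq s$. I will use freely that the approximation map $(-)[s]$ commutes with the term constructors, i.e. $(tu)[s]=t[s]\,u[s]$ and $(\pi_i t)[s]=\pi_i(t[s])$, so that equality of two terms in a state is preserved under application and projection.

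Base case ($\rho\in\{\Nat,\Bool,\Update\}$). To show $t\in\comp{\rho}$ I fix an arbitrary state $s$ and must produce some $s_1\geq s$ with $t\tdef{s_1}$. By hypothesis there are $s'\geq s$ and $u\in\comp{\rho}$ with $t[s'']=u[s'']$ for every $s''\geq s'$. Since $u\in\comp{\rho}$, applying its defining property to the state $s'$ yields a state $s_1\geq s'$ with $u\tdef{s_1}$; by transitivity $s_1\geq s$. I then claim $t\tdef{s_1}$: for any $s''\geq s_1$ we have $s''\geq s'$ and $s_1\geq s'$, so the chain $t[s'']=u[s'']=u[s_1]=t[s_1]$ holds, the outer equalities by the hypothesis and the middle one by $u\tdef{s_1}$. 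This gives $t\tdef{s_1}$ and hence $t\in\comp{\rho}$.

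Inductive cases. For $\rho=\tau\redto\sigma$, I fix $v\in\comp{\tau}$ and show $tv\in\comp{\sigma}$ by applying the induction hypothesis at type $\sigma$ to the term $tv$: given any state $s$, the hypothesis on $t$ furnishes $s'\geq s$ and $u\in\comp{\tau\redto\sigma}$ with $t[s'']=u[s'']$ for all $s''\geq s'$; then $uv\in\comp{\sigma}$, and since $(tv)[s'']=t[s'']\,v[s'']=u[s'']\,v[s'']=(uv)[s'']$, the term $w:=uv$ witnesses the premise of the induction hypothesis for $tv$. For $\rho=\tau\times\sigma$ the argument is identical with $\pi_0 t,\pi_1 t$ in place of $tv$ and $\pi_0 u,\pi_1 u$ as the computable witnesses. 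I expect the only genuinely delicate point to be the base case, where one must pick the state $s_1$ relative to $u$ rather than $t$ and then transport $t\tdef{s_1}$ back through the equality $t[s'']=u[s'']$; this is exactly where reflexivity and transitivity of $\geq$ are indispensable, whereas the higher-type cases are purely formal reductions to the induction hypothesis.
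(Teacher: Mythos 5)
Your proof is correct and follows essentially the same route as the paper's: induction on the type $\rho$, with the base case obtained by applying the computability of $u$ at the state $s'$ and transporting definedness back to $t$ through the equality $t[s'']=u[s'']$, and the arrow and product cases reduced formally to the induction hypothesis via $uv$ and $\pi_i u$. The only addition is your explicit verification of transitivity of $\geq$, which the paper uses silently.
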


\textit{Proof}.
By induction on the type $\rho$.

\begin{itemize}
\item $\rho =\Nat$.  Let $s$ be a state. We have to show that there exists a state $r\geq s$ such that $t\tdef{r}$.
By assumption on $t$ there exists a state $s'\geq s$ and  $u\in\comp{\Nat}$ such that for all $s''\geq s'$, $t[s'']=u[s'']$. Since $u\in\comp{\Nat}$,  there exists $s''\geq s'$ such that $u\tdef{s''}$. Let $r=s''$; we prove  $t\tdef{r}$. Let  $r'\geq r$. We have that $u[r']=u[r]$,  by $u\tdef{r}$, and $t[r']=u[r']$, since $r'\geq s'$. Hence, $t[r']=u[r]=t[r]$. We conclude $t\tdef{r}$ and finally $t\in\comp{\Nat}$.\\

\vspace{-1ex}

\item $\rho=\Bool,\Update$: as for the case $\rho=\Nat$.\\

\vspace{-1ex}

\item $\rho=\tau\redto \sigma$. Let $v\in\comp{\tau}$. We have to show that $tv\in\comp{\sigma}$. Let $s$ be any state. By assumption on $t$ there exist a state $s'\geq s$ and  $u\in\comp{\tau\redto\sigma}$ such that for all $s''\geq s'$, $t[s'']=u[s'']$. Therefore for all $s''\geq s'$, $tv[{s''}]=uv[s'']$ and $uv\in\comp{\sigma}$. Hence, by induction hypothesis, $tv\in\comp{\sigma}$.\\

\vspace{-1ex}

\item $\rho=\tau_0\times \tau_1$. Let $i\in\{0,1\}$, we have to show that $\pi_i t\in\comp{\tau_i}$. Let $s$ be any state. By assumption on $t$ there exist $s'\geq s$  and $u\in\comp{\tau_0\times \tau_1}$ such that  for all $s''\geq s'$, $t[{s''}]=u[s'']$. Therefore for all $s''\geq s'$, $\pi_{i}t[{s''}]=\pi_{i}u[s'']$ and $\pi_{i}u\in\comp{\tau_i}$. Hence, by induction hypothesis $\pi_{i}t\in\comp{\tau_i}$.
\end{itemize}

We are now ready to prove, by using the excluded middle alone, that every term $t$ of $\SystemTClass$ is computable.

{
\begin{theorem}[Computability Theorem]\label{theorem-computability}\mbox{}

\noindent
Let $v: \tau$ be a term of $\SystemTClass$ and suppose that all the free variables of $v$ are among $x_1^{\sigma_1},\ldots, x_{n}^{\sigma_n}$.
If $t_{1}\in\comp{\sigma_1}, \ldots, t_{n}\in\comp{\sigma_n}$, then $v[t_{1}/x^{\sigma_1}_{1},\ldots, t_{n}/x^{\sigma_n}_{n}]\in\comp{\tau}$.
\end{theorem}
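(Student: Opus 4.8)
The plan is to prove the statement by Tait's method of computability predicates, proceeding by induction on the structure of the term $v$; the hypothesis has already been generalized to open terms together with a substitution of computable terms, which is exactly the strengthening that makes the induction go through. Throughout, the workhorse is Lemma~\ref{lemma:saturation}: to show that a term $w$ lies in $\comp{\rho}$ it suffices to exhibit, for every state $s$, a state $s'\geq s$ and a \emph{computable} term $w'\in\comp{\rho}$ with $w[s'']=w'[s'']$ for all $s''\geq s'$. Since every reduction rule of $\SystemT$ is an equality between approximations at every state, this lets me replace a redex by its contractum and appeal to the computability of the latter. I will also use two auxiliary facts that are immediate from Definition~\ref{definition-soundupdates}: that $\tdef{}$ is monotone (if $t\tdef{s}$ and $s'\geq s$ then $t\tdef{s'}$) and that $\geq$ is transitive, so that the states witnessing computability of finitely many arguments can be merged into a single common $s'$ by going up repeatedly.

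The structural cases are routine. For a variable $x_j$ the substituted term is $t_j\in\comp{\sigma_j}$ by hypothesis. Application and projection are immediate from the definition of $\comp{\tau\redto\sigma}$ and $\comp{\tau\times\sigma}$ together with the induction hypothesis. For abstraction $\lambda y^{\sigma}.\,v_0$ and pairing $\langle v_1,v_2\rangle$ I feed a computable argument (respectively project), observe that the result $\beta$- or $\pi$-reduces at every state to a term covered by the induction hypothesis, and close the case with Lemma~\ref{lemma:saturation}. The $\SystemT$-constants $0,\suc,\True,\False,\upconst{U},\minu,\get,\mkupd,\Cup$ are handled similarly: the zero-ary ones contain no Skolem symbol, hence are defined in every state, and the strict operations are computable because, having merged the states that define their arguments into one $s'$ (at which those arguments reduce to numerals, booleans or update constants by the Normal Form Property, Lemma~\ref{lemma-normalform}), the whole application becomes defined at $s'$. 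For $\ifn_{\tau}$ I first pass to a state $s'$ where the boolean guard is defined; its value there is $\True$ or $\False$, and at every $s''\geq s'$ the conditional reduces to the corresponding branch, so Lemma~\ref{lemma:saturation} applies with that branch as witness. For $\rec_{\tau}$ I take computable $a,b,c$, pass to a state $s_0$ where $c$ is defined with value a numeral $\makenum{k}$, and prove $\rec_{\tau}\,a\,b\,\makenum{k}\in\comp{\tau}$ by an \emph{internal} induction on $k$ (using $\rec_{\tau}\,a\,b\,0=a$ and $\rec_{\tau}\,a\,b\,\makenum{k+1}=b\,\makenum{k}\,(\rec_{\tau}\,a\,b\,\makenum{k})$ together with $b\in\comp{\Nat\redto\tau\redto\tau}$ and the induction hypothesis); since $\rec_{\tau}\,a\,b\,c$ and $\rec_{\tau}\,a\,b\,\makenum{k}$ agree on all $s''\geq s_0$, Lemma~\ref{lemma:saturation} gives $\rec_{\tau}\,a\,b\,c\in\comp{\tau}$.

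The one genuinely nonconstructive case, and the place I expect the real difficulty, is a Skolem constant $\Phic_i=\Phic_A$, where I must show $\Phic_i\in\comp{\Nat\redto\Nat}$, i.e. that $\Phic_i\,u\in\comp{\Nat}$ for every $u\in\comp{\Nat}$. Given a state $s$, I first move to $s_0\geq s$ defining $u$, so $u[s_0]=\makenum{k}$, and write $\code{\vec n}=\makenum{k}$. The point is to manufacture a state $s'\geq s_0$ for which $(i,\makenum{k})\in\dm{s'}$: if so, then for every $s''\geq s'$ the definition of $\geq$ freezes $s''_i\makenum{k}=s'_i\makenum{k}$, while $u\tdef{s_0}$ freezes $u[s'']=\makenum{k}$, so $(\Phic_i\,u)[s'']=s''_i\makenum{k}$ is constant and $\Phic_i\,u\tdef{s'}$. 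To build such an $s'$ I invoke $\EM$ on the arithmetical formula $\exists y^{\Nat}A(\vec n,y)$. If it is false, or if $A\notin\FSet$, then $(i,\makenum{k})\in\dm{s_0}$ already and I take $s'=s_0$. If it is true, then either $(i,\makenum{k})\in\dm{s_0}$ already, in which case $s'=s_0$ works, or I extract a witness $y_0$ with $A(\vec n,y_0)$ and let $s'$ be $s_0$ reassigned only at the argument $\makenum{k}$ of its $i$-th component to the value $y_0$; since this point lies outside $\dm{s_0}$ the reassignment preserves $s'\geq s_0$, and by construction $(i,\makenum{k})\in\dm{s'}$.

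I expect the Skolem-constant case to carry all the real content: it is exactly here that the excluded middle is consumed, turning the search for a correct oracle value into a single classical decision plus a witness, and it is the step whose internalization in $\HAw+\EM$, together with the internal induction of the recursor case, underlies the conservativity argument. The remaining care is bookkeeping: checking that single-point reassignments of a state are again terms of $\SystemT$, that merging finitely many witnessing states via transitivity of $\geq$ and monotonicity of $\tdef{}$ is legitimate, and that approximation commutes with substitution, so that the reduction-based cases really do match the induction hypothesis.
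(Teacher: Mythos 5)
Your proposal is correct and follows essentially the same route as the paper's proof: induction on the term structure with the saturation lemma as the workhorse, an internal induction on numerals for the recursor, and the excluded middle consumed only in the Skolem-constant case, where a single point of the state is reassigned to a true witness (legitimately, because that point lies outside the current domain). The only cosmetic difference is that you apply $\EM$ to $\exists y^{\Nat}A(\vec n,y)$ and then test the current oracle value, whereas the paper applies it directly to $A(n,m)$ for the current value $m$; the two case analyses are interchangeable.
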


\textit{Proof}. We proceed by induction on $v$. We first remark that if $u=t$ and  $t\in\comp{\tau}$, then $u\in \comp{\tau}$ by trivial application of Lemma \ref{lemma:saturation}.
\begin{notation}
 For any term $w$ in $\SystemTClass$, we denote $w[t_1/x_1^{\sigma_1},\ldots, t_n/x^{\sigma_n}]$ with $\overline{w}$.
\end{notation}

\vspace{-2ex}

\begin{enumerate}
\item $v$ is a variable $x_i^{\sigma_{i}}:\sigma_{i}$ and $\tau=\sigma_i$. Then, $\sbs{v}=t_1\in\comp{\sigma_i}=\comp{\tau}$.\\

\vspace{-1ex}

\item $v$ is $0$,  $\True$, $\False$, $\upconst{U}$: trivial.\\

\vspace{-1ex}

\item  $v$ is $uw$, then by means of typing rules, $u:\sigma\redto\tau$, $w:\sigma$. Since by induction hypothesis $\overline{u}\in\comp{\sigma\redto\tau}$ and  $\overline{w}\in\comp{\sigma}$, we obtain $\overline{v}=\overline{u}\overline{w}\in\comp{\tau}$.\\

\vspace{-1ex}

\item $v$ is $\lambda x^{\tau_{1}}. u:\tau_{1}\redto\tau_{2}$. Then, by means of typing rules, $u:\tau_{2}$. Suppose now, for a term $t:\tau_1$ in $\SystemTClass$, that $t\in\comp{\tau_1}$. We have to prove that $\sbs{v}t\in\comp{\tau_2}$. We have:

\comment{
\[\begin{aligned}
\sbs{v}t &=
(\lambda x^{\tau_1}. u)[t_1/x_1^{\sigma_1}\cdots t_n/x_n^{\sigma_n} ] t
\\
&= (\lambda x^{\tau_1} u)t[t_1/x_1^{\sigma_1}\cdots t_n/x_n^{\sigma_n}]
\\
&=u[t/x^{\tau_1}][t_1/x_1^{\sigma_1}\cdots t_n/x_n^{\sigma_n}]
\\
&=u[t/x^{\tau_1}\, t_1/x_1^{\sigma_1}\cdots t_n/x_n^{\sigma_n}]
\\
\end{aligned}
\]
}

\[\begin{aligned}
\sbs{v}t &=
(\lambda x^{\tau_1}. u)[t_1/x_1^{\sigma_1}\cdots t_n/x_n^{\sigma_n} ] t \\
&= (\lambda x^{\tau_1} u)t[t_1/x_1^{\sigma_1}\cdots t_n/x_n^{\sigma_n}]
\mbox{} =u[t/x^{\tau_1}][t_1/x_1^{\sigma_1}\cdots t_n/x_n^{\sigma_n}]
=u[t/x^{\tau_1}\, t_1/x_1^{\sigma_1}\cdots t_n/x_n^{\sigma_n}]\end{aligned}
\]

\bigskip

By induction hypothesis, this latter term belongs to $\comp{\tau_2}$.
We conclude $\sbs{v}t\in\comp{\tau_2}$.\\

\vspace{-1ex}

\item $v$ is $\langle u,w\rangle:\tau_0\times\tau_1$. By means of typing rules, $u:\tau_0$, $w:\tau_1$ and by induction hypothesis $\pi_0 \substitution{v}=\sbs{u}\in\comp{\tau_0}$ and $\pi_1 \substitution{v} =\sbs{w}\in\comp{\tau_1}$. The thesis  $\sbs{v}\in\comp{\tau_0\times\tau_1}$ follows by definition.\\

\vspace{-1ex}

\item $v$ is $\pi_i(u):\tau_i$, $i=0,1$, where $u: \tau_0\times \tau_1$.  ${\pi_i \sbs{u}}\in\comp{\tau_i}$ because $\sbs{u}\in\comp{\tau_0\times \tau_1}$ by induction hypothesis. \\

\vspace{-1ex}

\item $v$ is $\ifn_{\tau}:\Bool\rightarrow \tau\rightarrow  \tau\rightarrow \tau$. Suppose that $u\in\comp{\Bool}$, $u_1\in\comp{\tau}$, $u_2\in\comp{\tau}$. Then, for all states $s$ there exists $s'\geq s$ such that $u\tdef{s'} $. We have to prove that $\ifn_{\tau} u u_{1} u_{2}\in{\comp{\tau}}$.
 Let $s$ be a state and let $s'\geq s$ be such that $u\tdef{s'}$. If $u[s']=\True$, then for all $s''\geq s'$, $\ifn_{\tau} u u_{1} u_{2}[s'']=u_1[s'']$ and ${u_1}\in\comp{\tau}$. If $u[s']=\False$, then for all $s''\geq s'$, $\ifn_{\tau} u u_{1} u_{2}[s'']=u_2[s'']$ and ${u_2}\in\comp{\tau}$.  By Lemma~\ref{lemma:saturation}, we conclude $\ifn_{\tau} u u_{1} u_{2}\in{\comp{\tau}}$.  \\

\vspace{-1ex}

\item $v$ is $\rec_{\tau}: \tau \rightarrow (\Nat \rightarrow (\tau \rightarrow \tau))\rightarrow \Nat\rightarrow \tau$. Suppose that $u\in\comp{\tau}$, $w\in\comp{\Nat \rightarrow (\tau \rightarrow \tau)}$, $z\in\comp{\Nat}$. We have to prove that $\rec_{\tau}\;uwz\in\comp{\tau}$.
By a plain induction, it is possible to prove, for each numeral $n$, $\rec_{\tau}\;uwn\in\comp{\tau}$.
Let $s$ be a state and let $s'\geq s$ be such that $z\tdef{s'}$.  Let $z[s']=n$ with $n$ numeral. Then for all $s''\geq s'$, $$\rec_{\tau}uvz[s'']=\rec_{\tau}uvn[s'']\in\comp{\tau}$$ By Lemma~\ref{lemma:saturation}, we conclude $\rec_{\tau}uwz\in\comp{\tau}$.\\

\vspace{-1ex}

\item $v$ is $\minu:\Update\rightarrow \Nat$. Suppose, for a term $u$ in $\SystemTClass$, that $u\in\comp{U}$. Let $s$ be a state. Since $u\in\comp{U}$, there exists $s'\geq s$ such that $u\tdef{s'}$.  We have to prove that $\minu \;u\in{\comp{\Nat}}$. There exists an update $\upconst{U}$ such that for all $s''\geq s'$, $u[s'']=\upconst{U}$. Then for all $s''\geq s'$, $\min u[s'']=\min \upconst{U}=n$ for some numeral $n$. By definition of $\comp{\Nat}$, $\minu \;u\in{\comp{\Nat}}$.\\

\vspace{-1ex}

\item $v$ is $\Cup: \Update\rightarrow \Update\rightarrow \Update$. Suppose that $u_1\in\comp{\Update}$ and $u_2\in\comp{\Update}$. We have to prove that $\Cup\; u_1 u_2\in\comp{\Update}$. Let $s$ be a state. Since $u_1\in\comp{\Update}$ there exists $s'\geq s$  such that $u_1\tdef{s'}$.
Since $u_2\in\comp{\Update}$,  there exists $s''\geq s'$ such that $u_2\tdef{s''}$. Therefore, there exist two constants $\upconst{U}_{1}$ and $\upconst{U}_{2}$ such that for all $s'''\geq s''$, $u_1[s''']=\upconst{U}_{1}$ and  $u_2[s''']=\upconst{U}_2$. Finally, for all $s'''\geq s''$, $$\Cup\; u_1 u_2[s''']=\Cup\; \upconst{U}_1 \upconst{U}_2=\upconst{U}_3$$ and by definition of $\comp{\Update}$, $\Cup\; u_1 u_2\in\comp{\Update}$.\\

\vspace{-1ex}

\item $v$ is $\suc$, $\mkupd$ or  $\get$. Analogous to the previous case.\\

\vspace{-1ex}

\item $v$ is a constant  $\Phic_{i}:\Nat\redto\Nat$ in $\skolemcon$. Suppose now, for a term $u:\Nat$, that  $u\in\comp{\Nat}$. We have to prove that $\Phic_i u\in\comp{\Nat}$. Let $s$ be a state. We must show that there exists a $s'\geq s$ such that $\Phic_{i}u \tdef{s'}$. Since $u\in\comp{\Nat}$,  there exists a state $s' \geq s$ such that $u\tdef{s'}$. Let  $n=u[s']$, with $n$ numeral, and $m=s'_{i}(n)$. Let $\Phic_i=\Phic_{A(x,y)}$. If $A\notin\FSet$, then trivially $(i,n)\in \dm{s'}$ by definition \ref{definition-soundupdates}. Therefore for all $s''\geq s'$, $\Phic_{i} u[s'']=s''_{i}(n)=m$ and we are done. Hence, we may assume $A\in\FSet$. There are two cases, and this is the only point of this proof in which we use $\EM$.\\

\begin{enumerate}\item
$A(n,m)$ is true. Therefore, for all $s''\geq s'$, $s''_i(n)=m$ because $(i,n)\in\dm{s'}$. Thus, for all $s''\geq s'$, $\Phic_{i} u[s'']=s''_{i}(n)=m$, which is the thesis. \\

 \item
 $A (n,m)$ is false. If there exists $l$ such that  $A(n,l)$ is true, then let

$$s'':=\lambda x^{\Nat}\lambda y^{\Nat}.\, \ifthen{x=i\land_{\Bool} y={{n}}}{m}{s'_{x}(y)}$$

Then, for all $s'''\geq s''$, $s'''_{i}(n)=l$ because $(i, l)\in\dm{s''}$. Thus, for all $s'''\geq s''$, $\Phic_{i} u[s''']=s'''_{i}(n)=l$, which is the thesis.
If there is no $l$ such that  $A(n,l)$ is true, then trivially $(i,n)\in\dm{s'}$. Thus for all $s''\geq s'$, $\Phic_{i} u[s'']=s''_i(n)=m$ and we are done.
\end{enumerate}

\end{enumerate}

}

According to the Definition~\ref{definition-truthstate} of the truth value $\truthst{A}{s}$ of a formula $A$ in a state $s$, when we compute $\truthst{A}{s}$ we need only a finite number of Skolem function values, one for each quantifier of $A$. Thus, we can show with the excluded middle that for every state $s$ there exists a state $s'\geq s$ such that when we evaluate $A$ in the state $s'$ we obtain the real truth value of $A$.

{
\begin{proposition}\label{proposition-computabletruth}
Let $A(\vec{x})$ be any arithmetical formula and $\vec{n}$ be numerals. For every state $s$, there exists a state $s'\geq s$ such that $\truthst{A}{s'}(\vec{n})=\True$ if and only if $A(\vec{n})$ is true.
\end{proposition}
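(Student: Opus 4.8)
The plan is to prove the statement by induction on the logical complexity (number of logical symbols) of $A$, carrying along a persistence strengthening that makes the propositional cases compose. Precisely, I would establish: for every arithmetical $A(\vec{x})$, numerals $\vec{n}$ and state $s$, there is $s'\geq s$ with $\truth{A}(\vec{n})\tdef{s'}$ \emph{and} $\truthst{A}{s'}(\vec{n})=\True$ iff $A(\vec{n})$ is true. The definedness clause is the key technical device for combining subformulas: by Definition~\ref{definition-defstate} it directly gives $\truthst{A}{s''}(\vec n)=\truthst{A}{s'}(\vec n)$ for every $s''\geq s'$, and (using transitivity of $\geq$) it propagates to any boolean combination, so a value once correct stays correct along $\geq$. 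I note in passing that the definedness part alone is already a special case of Theorem~\ref{theorem-computability} (since $\truth{A}(\vec n)\in\comp{\Bool}$), so the genuine work lies in controlling the stable value.

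The propositional cases are routine. For atomic $P$ we have $\truth{P}=P$, which contains no constant of $\skolemcon$, so $\truthst{P}{s}(\vec n)$ is already the genuine truth value and $s'=s$ works. For $A=B\wedge C$, $B\vee C$, $B\rightarrow C$ I would apply the induction hypothesis first to the left subformula, producing $s'\geq s$, then to the right one from $s'$, producing $s''\geq s'$; definedness keeps the left subformula correct at $s''$, and since $\truth{\cdot}$ commutes with $\land_{\Bool},\lor_{\Bool},\Rightarrow_{\Bool}$ (Definition~\ref{definition-truthstate}) the value at $s''$ equals the corresponding classical boolean combination of the two genuine truth values. The case $A=B\sub C$ is handled identically after invoking the induction hypothesis on $C^{\bot}$ (which has strictly smaller complexity than $B\sub C$), together with the elementary fact that $C^{\bot}(\vec m)$ is true iff $C(\vec m)$ is false; this negation lemma is the genuine-truth analogue of the involutive-negation identities and is itself a straightforward induction on $C$.

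The quantifier cases are the heart of the argument and the point where $\EM$ enters. Consider $A=\exists y^{\Nat}B(\vec x,y)$; by Definition~\ref{definition-truthstate}, $\truthst{A}{s}(\vec n)=\truthst{B}{s}(\vec n,\, s_{B}\code{\vec n})$. I would first produce $s'\geq s$ that is a \emph{correct} approximation of $\Phic_B$ at $\code{\vec n}$, i.e. with $(i,\code{\vec n})\in\dm{s'}$ for $\Phic_i=\Phic_B$: by $\EM$ decide whether $\exists y\,B(\vec n,y)$ holds; if it does, let $s'$ override $s$ at the single point $(i,\code{\vec n})$ with a genuine witness (the least one, available by the least-number principle in $\HAw+\EM$), and otherwise take $s'=s$. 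This single-point correction preserves $s'\geq s$ and is exactly the construction used in the Skolem case of Theorem~\ref{theorem-computability}. Writing $m=s'_{B}\code{\vec n}$ and applying the induction hypothesis to $B$ at $(\vec n,m)$ from $s'$ yields $s''\geq s'$; since $(i,\code{\vec n})\in\dm{s'}\subseteq\dm{s''}$ forces $s''_{B}\code{\vec n}=m$, we obtain $\truthst{A}{s''}(\vec n)=\truthst{B}{s''}(\vec n,m)$, equal by induction hypothesis to the truth of $B(\vec n,m)$, which by the choice of $m$ is $\True$ precisely when $\exists y\,B(\vec n,y)$ holds. The case $A=\forall y^{\Nat}B$ is symmetric, using $\Phic_{B^{\bot}}$ in place of $\Phic_B$ and the negation lemma to read a witness of $\exists y\,B^{\bot}$ as a counterexample to $\forall y\,B$.

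The main obstacle is exactly this Skolem-correction step, and with it the silent role of $\FSet$. The override only yields an $s'\geq s$ with the argument in $\dm{s'}$ when the formula governing that Skolem symbol lies in $\FSet$; otherwise $\dm{s}$ contains the argument trivially (Definition~\ref{definition-soundupdates}), $s'\geq s$ cannot change the stored value there, and a spurious non-witness could survive and corrupt the computed truth value. Hence the induction must be run under the assumption --- harmless because it is guaranteed by the choice of $\FSet$ whenever the proposition is applied --- that every matrix $B$ and every negation $B^{\bot}$ attached to a quantifier of $A$ belongs to $\FSet$. The only delicate bookkeeping is tracking which intermediate state is pinned at which Skolem argument and verifying that these commitments, being monotone along $\geq$, never need to be revised by later steps; everything else reduces to Definition~\ref{definition-truthstate}, the definedness machinery, and classical reasoning available in $\HAw+\EM$.
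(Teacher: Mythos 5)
Your proof follows essentially the same route as the paper's: induction on $A$, with the excluded middle entering only in the quantifier cases to extend the state so that $s'_{B}\code{\vec n}$ (resp.\ $s'_{B^{\bot}}\code{\vec n}$) is a genuine witness whenever one exists, after which $(i,\code{\vec n})\in\dm{s'}$ pins that value under all further extensions and the induction hypothesis closes the case. The two places where you are more careful than the paper are genuine improvements rather than detours. First, the paper dismisses the propositional cases as ``trivial'', but as you observe they do not compose from the bare existential statement: after correcting $B$ at $s'$ and then $C$ at some $s''\geq s'$, nothing in the unstrengthened induction hypothesis guarantees that $\truthst{B}{s''}(\vec n)$ still agrees with the truth of $B(\vec n)$; your added definedness clause $\truth{A}(\vec n)\tdef{s'}$ (or any equivalent persistence statement) is exactly what is needed, and it is also implicitly what the paper's quantifier cases establish. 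Second, your caveat about $\FSet$ is correct and the paper is silent on it: if the matrix $B$ of a quantifier of $A$ is not in $\FSet$, then $(i,\code{\vec n})\in\dm{s}$ holds trivially, every $s'\geq s$ must retain the possibly wrong value $s_{B}\code{\vec n}$, and the proposition as literally stated can fail; it holds only under the hypothesis --- always satisfied where the proposition is applied --- that the relevant matrices and their involutive negations lie in $\FSet$.
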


\comment{
\begin{proof} Let $s$ be any state. By Theorem \ref{theorem-computability}, let us consider any  state $s'\geq s$ such that $\truth{A}(\vec{n})\tdef{s'}$. Let $\truth{A}[s'](\vec{n})=b$, with $b$ boolean. Since by definition \ref{definition-truthstate} $\truthst{A}{s'}(\vec{n})=\truth{A}[s'](\vec{n})$, it suffices to prove, by induction on $A$, that $b=\True$ if and only if ${A}(\vec{n})$ is true. The cases in which $A$ is atomic or $A=B\lor C, B\land C, B\rightarrow C$ are trivial. Let us consider the cases in which $A$ starts with a quantifier.
\begin{itemize}
\item $A(\vec{n})=\exists y^{\Nat} B(\vec{n}, y)$. Then
\[b=\truth{A}[s'](\vec{n})=\truth{B}(\vec{n}, \Phic_{B}\code{\vec{n}})[s']=\truth{B}[s'](\vec{n}, s'_{B}\code{\vec{n}})\]

Since for all $s''\geq s'$, $\truth{A}[s''](\vec{n})=b$, we may assume that $m=s'_{B}\code{\vec{n}}$ implies that
\[\exists y^{\Nat} B(\vec{n}, y)\rightarrow B(\vec{n}, m)\] (otherwise, one may extend the state $s'$ to a state $s''$ with the property above by using $\EM$).
Furthermore, $\truth{B}(\vec{n}, m)\tdef{s'}$ and thus by induction hypothesis, $B(\vec{n}, m)$ is true if and only if $b=\True$. We conclude that $A(\vec{n})$ is true if and only if $b=\True$.

\vspace{5ex}

\item  $A(\vec{n})=\forall y^{\Nat} B(\vec{n}, y)$. Then
\[b=\truth{A}[s'](\vec{n})=\truth{B}(\vec{n}, \Phic_{B^{\bot}}\code{\vec{n}})[s']=\truth{B}[s'](\vec{n}, s'_{B^{\bot}}\code{\vec{n}})\]
Since for all $s''\geq s'$, $\truth{A}[s''](\vec{n})=b$, as above we may assume that $m=s'_{B^{\bot}}\code{\vec{n}}$ implies that
\[\exists y^{\Nat} B^{\bot}(\vec{n}, y)\rightarrow B^{\bot}(\vec{n}, m)\]
Furthermore,  $\truth{B}(\vec{n}, m)\tdef{s'}$ and thus by induction hypothesis, $B(\vec{n}, m)$ is true if and only if $b=\True$. Since $B(\vec{n}, m)$ is true if and only if $\forall y^{\Nat}B(\vec{n},y)$ is true, we conclude that $A(\vec{n})$ is true if and only if $b=\True$.

\end{itemize}

\end{proof}
}

\textit{Proof}. We prove the thesis by induction on $A$.  Let $s$ be any state.  The cases in which $A$ is atomic or $A=B\lor C, B\land C, B\rightarrow C$ are trivial. Let us consider those in which $A$ starts with a quantifier.
\begin{itemize}

\item $A(\vec{n})=\exists y^{\Nat} B(\vec{n}, y)$. By the excluded middle, we extend $s$ to a state $s'\geq s$ such that  $m=s'_{B}\code{\vec{n}}$ implies that $$\exists y^{\Nat} B(\vec{n}, y)\rightarrow B(\vec{n}, m)$$

By induction hypothesis, there exists a state $s''\geq s'$ such that $B(\vec{n},m)$ is true if and only if

\[ \truthst{B}{s''}(\vec{n},m)=\truth{B}(\vec{n}, m)[s'']=\True\]

Assuming $\Phic_{i}=\Phic_{B}$, since $(i,\code{\vec{n}})\in \dm{s'}$, we have $s''_{B}\code{\vec{n}}=s'_{B}\code{\vec{n}}$. Since $$\truthst{A}{s''}(\vec{n})=\truth{B}(\vec{n}, \Phic_{B}\code{\vec{n}})[s'']= \truth{B}(\vec{n}, m)[s'']$$
and  $A(\vec{n})$ is equivalent to $B(\vec{n}, m)$, we obtain the thesis.
\vspace{2.5ex}

\item  $A(\vec{n})=\forall y^{\Nat} B(\vec{n}, y)$.  By the excluded middle, we extend $s$ to a state $s'\geq s$ such that  $m=s'_{B^{\bot}}\code{\vec{n}}$ implies that $$\exists y^{\Nat} B^{\bot}(\vec{n}, y)\rightarrow B^{\bot}(\vec{n}, m)$$

By induction hypothesis, there exists a state $s''\geq s'$ such that $B^{\bot}(\vec{n},m)$ is true if and only if

\[ \truthst{(B^{\bot})}{s''}(\vec{n},m)=\truth{B^{\bot}}[s''](\vec{n}, m)=\True\]

Assuming $\Phic_{i}=\Phic_{B^{\bot}}$, since $(i,\code{\vec{n}})\in \dm{s'}$, we have $s''_{B^{\bot}}\code{\vec{n}}=s'_{B^{\bot}}\code{\vec{n}}$. Since  $$\truthst{A}{s''}(\vec{n})=\truth{B^{\bot}}(\vec{n}, \Phic_{B^{\bot}}\code{\vec{n}})[s'']= \truth{B^{\bot}}(\vec{n}, m)[s'']$$ we obtain the thesis.

\end{itemize}

Now we prove a special case of the statement that the realizability of a formula implies the formula itself. Namely, we show that $t$ realizes $\bot$  implies $\bot$. The idea, as we have explained before, is to find a state $s$ which contains all the information needed to evaluate $t$.

\begin{theorem}[Consistency of Interactive Realizability]\label{theorem-consistency}
For every closed term $t$ of $\SystemTClass$, $ t \notreal \bot$. In particular, for every state $s$, there exists a state $s'\geq s$ such that $t\notreal_{s'} \bot$.
\end{theorem}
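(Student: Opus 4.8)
The plan is to refute realizability of $\bot=\False$ by first pinning down the value $t[s']$ with the Computability Theorem and then, if necessary, correcting the state so that the update returned by $t$ stops being a legitimate list of corrections. Recall that $|\bot|=\Update$, so $t[s']$ always normalizes to some $\upconst{U}$, and unfolding clause~1 of Definition~\ref{definition-IndexedRealizabilityAndRealizability} for the atomic formula $\False$ shows that $t\Vvdash_{s'}\bot$ requires: (i) if $U=\emptyset$ then $\False[s']=\True$, which is impossible; and (ii) if $U\neq\emptyset$ then for every $(i,\vec n,m)\in U$ one has $\Phic_i=\Phic_A$ with $A\in\FSet$, $\truthst{A}{s'}(\vec n,s'_i\code{\vec n})=\False$ and $\truthst{A}{s'}(\vec n,m)=\True$. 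Hence, given $s$, it suffices to produce a state $s'\geq s$ at which either $t[s']=\varnothing$ or some triple of $t[s']$ violates one of these two truth conditions.

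First I would fix the value of $t$. Since $t$ is closed of type $\Update$, the Computability Theorem~\ref{theorem-computability} gives $t\in\comp{\Update}$, so from the starting state $s$ I obtain $r\geq s$ with $t\tdef{r}$; writing $t[r]=\upconst{U}$, the defining property of $\tdef{r}$ guarantees $t[r']=\upconst{U}$ for every $r'\geq r$. If $U=\emptyset$ I am already done: at $s'=r$ the premise of case (i) holds while its conclusion $\False=\True$ fails, so $t\notreal_{r}\bot$.

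The interesting case is $U\neq\emptyset$. Here I pick one triple $(i,\vec n,m)\in U$ and set $\Phic_i=\Phic_A$. If $A\notin\FSet$ the first soundness requirement already fails, so assume $A\in\FSet$. Using $\EM$ exactly as in the Skolem-constant case of the Computability Theorem, I extend $r$ to a state $r'\geq r$ that is correct at $(i,\code{\vec n})$, namely $(i,\code{\vec n})\in\dm{r'}$; by Definition~\ref{definition-soundupdates} this means $r'_i\code{\vec n}$ is a genuine witness for $A(\vec n,\cdot)$ whenever one exists. A case split on $\exists y^{\Nat}A(\vec n,y)$ then finishes, after invoking Proposition~\ref{proposition-computabletruth} on the relevant closed instance to reach $s'\geq r'$ at which the approximated truth value coincides with the real one: if $\exists y^{\Nat}A(\vec n,y)$ holds then $A(\vec n,r'_i\code{\vec n})$ is true, whence $\truthst{A}{s'}(\vec n,s'_i\code{\vec n})=\True\neq\False$; if it fails then $A(\vec n,m)$ is false, whence $\truthst{A}{s'}(\vec n,m)=\False\neq\True$. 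Either way a condition required by case (ii) is violated at $s'$, where $t[s']=\upconst{U}$ still holds because $s'\geq r$, so $t\notreal_{s'}\bot$.

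The main obstacle I anticipate lies in the bookkeeping that keeps the successive state extensions compatible. I must maintain $s'\geq r$ throughout, so that $t[s']=\upconst{U}$ is preserved by $t\tdef{r}$, while simultaneously forcing the approximated truth values to agree with the real ones. The delicate point is that applying Proposition~\ref{proposition-computabletruth} after $(i,\code{\vec n})$ has entered the domain must not disturb the witness value $r'_i\code{\vec n}$; this is precisely guaranteed by the fact that once $(i,\code{\vec n})\in\dm{r'}$, every extension $s'\geq r'$ satisfies $s'_i\code{\vec n}=r'_i\code{\vec n}$. Transitivity of $\geq$ and monotonicity of $\tdef{}$ then splice the finitely many extension steps into a single $s'\geq s$, which is all that is required.
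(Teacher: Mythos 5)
Your proof is correct and takes essentially the same route as the paper's: fix $t[r]=\upconst{U}$ by the Computability Theorem, then use the excluded middle together with Proposition~\ref{proposition-computabletruth} to reach a state $s'\geq r$ at which some triple of $U$ violates the soundness conditions of the atomic clause of realizability. The only differences are organizational — the paper argues by contradiction and case-splits on the truth of $A(\vec n,q_i\code{\vec n})$, while you argue directly, first force $(i,\code{\vec n})\in\dm{r'}$ and then split on $\exists y^{\Nat}A(\vec n,y)$, which also makes explicit the case $A\notin\FSet$ that the paper leaves implicit.
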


\textit{Proof}.
Suppose, for the sake of contradiction, that there exists a term $t$ such that $t\Vvdash \bot$. Let $s$ be any state. Since $t: \Update$, by theorem \ref{theorem-computability} we have $t\in \comp{\Update}$ and therefore there exists a state $r\geq s$ such that  $t\tdef{r}$. Let  $t[r]=\upconst{U}$ for some update $U$. Since $t\Vvdash_{r} \bot$, $U$ is non-empty: let $(i, \vec{n}, m)\in U$. By application of theorem \ref{theorem-computability}, if $\Phic_i=\Phic_{A}$, there exists a state $q\geq r$ such that
$\truth{A}(\vec{n}, m)\tdef{q}$. By definition,
 \[\truthst{A}{q}(\vec{n}, m)=\truth{A}(\vec{n}, m)[q]=b\]
for some boolean $b$. Since $t\Vvdash_{q} \bot$ and  $t[q]=\upconst{U}$ (because $t\tdef{r}$ and $q\geq r$), we obtain  by definition of realizability that $b=\True$. Let $q_{i}\code{\vec{n}}=l$. We have two possibilities:
\begin{enumerate}
\item $A(\vec{n}, l)$ is false. We define the state
\[s':=\lambda x^{\Nat}\lambda y^{\Nat}.\, \ifthen{x=i\land_{\Bool} y=\code{\vec{n}}}{m}{q_{x}(y)}\]
Then, $s'\geq q$, for $A(\vec{n}, l)$ is false. Moreover, since $\truth{A}(\vec{n}, m)\tdef{q}$, for all $q'\geq q$, $\truth{A}(\vec{n}, m)[q']=b$; by Proposition \ref{proposition-computabletruth}, there exists $q'\geq q$, such that $\truth{A}(\vec{n}, m)[q']=\True$ if and only if $A(\vec{n},m)$ is true. Since  $\truth{A}(\vec{n}, m)[q']=b=\True$, we have that $A(\vec{n},m)$ is true.
 By assumption on $t$, we have $t\Vvdash_{s'}\bot$ and $t[s']=\upconst{U}$, because $s'\geq r$. Since $s'_{i}\code{\vec{n}}=m$, by definition of $t\Vvdash_{s'} \bot$ we would have both $\truthst{A}{s'}(\vec{n},m)=\False$ and $\truthst{A}{s'}(\vec{n},m)=\True$, which is a contradiction. \\

\item $A(\vec{n}, l)$ is true. By Proposition \ref{proposition-computabletruth}, there is a state $s'\geq q$ such that $\truthst{A}{s'}(\vec{n},l)=\True$. By assumption on $t$, we have $t\Vvdash_{s'}\bot$ and $t[s']=\upconst{U}$. But $q_{i}\code{\vec{n}}=l$, $A(\vec{n}, l)$ is true and $s'\geq q$; therefore $(i,\vec{n})\in\dm{q}$ and  hence $s'_{i}\code{\vec{n}}=l$.  By definition of $t\Vvdash_{s'} \bot$, we would have  $\truthst{A}{s'}(\vec{n},l)=\False$ and $\truthst{A}{s'}(\vec{n},m)=\True$, which is in contradiction with  $\truthst{A}{s'}(\vec{n},l)=\True$.
\end{enumerate}

Finally, we are in a position to prove in $\HAw+\EM$ that the realizability of a formula $A$ implies its truth. For simplicity we assume $A$ is a  $\rightarrow$-free, but the result holds also in the general case.

\begin{theorem}[Soundness of Realizability]\label{theorem:RealTruth}
 Let $A$ be any $\rightarrow$-free arithmetical formula and suppose $t\Vvdash A$. Then $A$ is true.
\end{theorem}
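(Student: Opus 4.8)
The plan is to prove, by induction on the $\rightarrow$-free arithmetical formula $A$, a statement slightly stronger than the theorem, relativised to an arbitrary base state: \emph{for every $\rightarrow$-free arithmetical $A(\vec x)$, numerals $\vec n$, closed $t:|A|$ and state $r$, if $t\Vvdash_s A(\vec n)$ for all $s\geq r$, then $A(\vec n)$ is true}. The theorem follows by instantiating with any $r$ and the hypothesis $t\Vvdash A$, which gives $t\Vvdash_s A$ for all states $s$. Here ``true'' means truth in the standard model, so $B\lor C$ is true iff one of $B,C$ is, $\forall y\,B$ is true iff every numeral instance is, and dually for $\land,\exists$. The relativisation to $r$ is the device that lets the induction close: in the $\lor$ and $\exists$ cases the realiser's choice of branch or of witness can depend on the state, and I must move to states above the point where this choice stabilises.

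The base case carries the real content, and I would settle it by the Consistency Theorem~\ref{theorem-consistency}. If $A=P$ is atomic then, being arithmetical, $P$ is a closed $\SystemT$ term of type $\Bool$, so $P[s]=P$ normalises to $\True$ or $\False$. Were $P$ false we would have $P=\bot$, hence $t\Vvdash_s\bot$ for all $s\geq r$; but Theorem~\ref{theorem-consistency} furnishes $s'\geq r$ with $t\notreal_{s'}\bot$, a contradiction, so $P$ is true. The $\land$ and $\forall$ cases are then routine: from $t\Vvdash_s B\land C$ for all $s\geq r$ I extract $\pi_0t\Vvdash_s B$ and $\pi_1t\Vvdash_s C$ and invoke the induction hypothesis with the \emph{same} $r$; for $\forall y^{\Nat}B$, fixing a numeral $k$ I have $t\makenum{k}\Vvdash_s B(\vec n,k)$ for all $s\geq r$, whence $B(\vec n,k)$ is true for every $k$.

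The cases that actually use the relativisation are $\lor$ and $\exists$, both resting on the Computability Theorem~\ref{theorem-computability}. For $A=B\lor C$, the boolean $\proj_0t$ lies in $\comp{\Bool}$, so I pick $s_0\geq r$ with $\proj_0t\tdef{s_0}$ and put $b:=\proj_0t[s_0]$; since $s'\geq s$ implies $\dm{s'}\supseteq\dm{s}$, the ordering is transitive and $\proj_0t[s]=b$ for all $s\geq s_0$. If $b=\True$ the realizability clause gives $\proj_1t\Vvdash_s B$ for all $s\geq s_0$, and the induction hypothesis with base state $s_0$ makes $B(\vec n)$ true; $b=\False$ is symmetric. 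For $A=\exists y^{\Nat}B(\vec x,y)$, $\pi_0t\in\comp{\Nat}$, so I choose $s_0\geq r$ with $\pi_0t\tdef{s_0}$ and let $k:=\pi_0t[s_0]$, a numeral constant on all $s\geq s_0$. For such $s$ the clause yields a witness $u$ with $\pi_0t[s]=u$, forcing $u=k$, together with $\pi_1t\Vvdash_s B(\vec n,u)$; Saturation (Proposition~\ref{proposition-saturation}) then gives $\pi_1t\Vvdash_s B(\vec n,k)$ for all $s\geq s_0$, and the induction hypothesis makes $B(\vec n,k)$ true, so $\exists y\,B(\vec n,y)$ holds.

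The main obstacle is, in a sense, already behind us: it is concentrated in the Consistency Theorem invoked for the atomic case, whose hard part---that a self-correcting realiser cannot, at a good enough state, return a non-empty update contradicting a value the state already gets right---is where learning is genuinely exploited. Beyond that, the one delicate structural point is the state-dependence of the realiser's decisions in the $\lor$ and $\exists$ cases, which I dissolve by stabilising the relevant computable boolean or numeral via Theorem~\ref{theorem-computability} and then propagating the inductive statement upward through the ordering of states; the relativisation to a base state $r$ is precisely what makes this propagation legitimate. I take the $\rightarrow$-free fragment to be the positive formulas (atoms closed under $\land,\lor,\forall,\exists$), the co-implication $\sub$ being the involutive dual of $\rightarrow$ and hence absent here, so the five cases above exhaust the induction.
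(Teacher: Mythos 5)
Your proposal is correct and follows essentially the same route as the paper: the same strengthened, state-relativised induction hypothesis, the atomic case settled by the Consistency Theorem~\ref{theorem-consistency}, and the $\lor$ and $\exists$ cases handled by stabilising $\proj_0t$ (resp.\ $\pi_0t$) via the Computability Theorem~\ref{theorem-computability} before re-applying the induction hypothesis at the larger base state. The only differences are cosmetic refinements (the explicit appeal to Saturation in the $\exists$ case and the remark on the absence of $\sub$), which the paper leaves implicit.
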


\textit{Proof}.
We prove a stronger statement. Let $s$ be a state and suppose that for all $s'\geq s$, $t\Vvdash_{s'} A$. We prove by induction on $A$, that $A$ is true.

\begin{itemize}
\item $A=P$, with $P$ atomic. Suppose, by the way of contradiction, that $P$ is false. Then we have that for all $s'\geq s$, $t\Vvdash_{s'} \bot$, which is impossible by Theorem \ref{theorem-consistency}. \\

\vspace{-1ex}

\item $A=B\land C$. Then, for all $s'\geq s$, $t\Vvdash_{s'} A$ and $t\Vvdash_{s'} B$. By induction hypothesis $A$ and $B$ are true, and we obtain the thesis.\\

\vspace{-1ex}

\item $A=B \lor C$. By Theorem \ref{theorem-computability}, there exists a state $r\geq s$ such that $\proj_{0}t \tdef{r}$. Let $\proj_{0}t[r]=b$ with $b$ boolean, say $b=\True$. Then, by defintion,  for every $r'\geq r$,  $\proj_{0}t[r']=\True$ and therefore $t\Vvdash_{r'} A$. By induction hypothesis $A$ is true, and we obtain the thesis. \\

\vspace{-1ex}

\item $A=\forall x^{\Nat} B$. Let $n$ be any numeral. Then, for all $s'\geq s$, $tn\Vvdash_{s'} B(n)$. By induction hypothesis $B(n)$ is true. Therefore, $\forall x^{\Nat} B$ is true, and we obtain the thesis. \\

\vspace{-1ex}

\item $A=\exists x^{\Nat} B$. By Theorem \ref{theorem-computability}, there exists a state $r\geq s$ such that $\pi_{0}t \tdef{r}$. Let $\pi_{0}t[r]=n$ with $n$ numeral. Then, by definition,  for every $r'\geq r$,  $\pi_{0}t[r']=n$ and therefore $t\Vvdash_{r'} B(n)$. By induction hypothesis $B(n)$ is true, and we obtain the thesis.

\end{itemize}

Since all the proofs given in this section are formalizable in $\HAw+\EM$ (see Section~\ref{section-formalization}), we are able to prove the conservativity of $\HA^{\omega}+\EM+\SK$ over $\HA^{\omega}+\EM$ for arithmetical formulas.

\begin{theorem}[Conservativity of $\HA^{\omega}+\EM+\SK$ over $\HA^{\omega}+\EM$] Let $A$ be a closed arithmetical formula, and suppose $$\HA^{\omega}+\EM+\SK\vdash A$$ Then:
 \begin{equation}~\label{ris1}\HA^{\omega}+\EM\vdash A\end{equation}
   \begin{equation}~\label{ris2}\HA+\EM\vdash A \end{equation}

\end{theorem}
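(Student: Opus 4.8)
The plan is to chain together the two main results already established in this section — the Adequacy Theorem~\ref{Realizability Theorem} and the Soundness of Realizability (Theorem~\ref{theorem:RealTruth}) — and then to descend from the higher-type system to first-order arithmetic by a standard conservativity result. First I would assume $\HAw+\EM+\SK\vdash A$ and, using the Curry--Howard term assignment introduced above, extract a closed term $t$ of $\SystemTClass$ together with a derivation $\HAw+\EM+\SK\vdash t:A$. Since any such derivation uses only finitely many instances of $\EM$ and $\SK$, I would fix $\FSet$ to contain all the subformulas of those instances, so that the hypothesis of the Adequacy Theorem is met.

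Next I would apply the Adequacy Theorem to conclude $t\Vvdash A$. The crucial observation is that, because $\FSet$ is finite, the relation $t\Vvdash_s A$ (and hence $t\Vvdash A$) unfolds into an \emph{arithmetical} statement of $\HAw$: the ordering $s'\geq s$ is arithmetical, truth in a state is computed by a term of $\SystemT$, and the quantifications over states are quantifications over closed terms of type $\Nat^{2}\redto\Nat$, all available in $\HAw$. Thus the assertion ``$t$ realizes $A$'' is itself a formula of $\HAw$, and since the proof of the Adequacy Theorem is formalizable in $\HAw$, we obtain $\HAw\vdash t\Vvdash A$.

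I would then invoke Theorem~\ref{theorem:RealTruth} in its general (not merely $\rightarrow$-free) form: since $A$ is arithmetical, $t\Vvdash A$ implies that $A$ is true. The entire argument establishing this implication — the Computability Theorem~\ref{theorem-computability}, Proposition~\ref{proposition-computabletruth}, and the Consistency Theorem~\ref{theorem-consistency} — uses the excluded middle only at the points explicitly flagged (notably the Skolem-constant case of computability), and is therefore formalizable in $\HAw+\EM$. Hence $\HAw+\EM\vdash (t\Vvdash A)\rightarrow A$, and combining this with $\HAw\vdash t\Vvdash A$ by modus ponens yields $\HAw+\EM\vdash A$, which is~(\ref{ris1}). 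For~(\ref{ris2}) I would appeal to the well-known conservativity of $\HAw+\EM$ (i.e.\ $\PA^{\omega}$) over $\HA+\EM$ (i.e.\ $\PA$) for arithmetical formulas; since $A$ is arithmetical, this gives $\HA+\EM\vdash A$.

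The main obstacle I anticipate is not the logical chaining, which is short, but verifying the \emph{formalizability} claim on which everything rests: that each meta-theoretic step of this section can be internalized as a derivation in $\HAw+\EM$. This hinges essentially on the restriction to the finite set $\FSet$, which is what keeps ``$t\Vvdash A$'' arithmetical and keeps the state-ordering $s'\geq s$ arithmetical; without it one would be forced to speak about the truth of infinitely many formulas, as the earlier unrestricted presentation did. One must also check that the induction on types in the Computability Theorem and the induction on formulas in the Soundness Theorem survive internalization — precisely the verification carried out in Section~\ref{section-formalization}.
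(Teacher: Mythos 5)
Your proposal follows essentially the same route as the paper: fix $\FSet$ to cover the classical axioms used in the derivation, formalize the Adequacy Theorem~\ref{Realizability Theorem} in $\HAw$ to get $\HAw\vdash t\Vvdash A$, formalize Theorem~\ref{theorem:RealTruth} in $\HAw+\EM$ to get $\HAw+\EM\vdash (t\Vvdash A)\rightarrow A$, and descend to $\HA+\EM$ via the standard conservativity of $\HAw+\EM$ over $\HA+\EM$ for arithmetical formulas. The one point where you are slightly less careful is the implication connective: Theorem~\ref{theorem:RealTruth} is proved only for $\rightarrow$-free formulas, and instead of invoking an unproven ``general form'' the paper first replaces $A$ by a provably equivalent $\rightarrow$-free formula $B$ (using $\HAw+\EM\vdash A\leftrightarrow B$) and runs the argument on $B$.
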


\textit{Proof}.\mbox{}\\

\vspace{-4.5ex}

 \begin{enumerate}\item We may assume that $A$ is $\redto$-free. Otherwise, $$\HA^{\omega}+\EM\vdash A\leftrightarrow B$$
with $B$ $\redto$-free and we consider $B$. Since $\FSet$ is arbitrary, we may assume that  all the subformulas of the instances of $\EM$ and $\SK$ used in the derivation belong to $\FSet$. By formalization of  the Adequacy Theorem~\ref{Realizability Theorem} in $\HA^{\omega}$ (see Section \ref{section-formalization}), we obtain that $\HA^{\omega}\vdash t\Vvdash A$ for some term $t$ of $\SystemTClass$. By formalization of the proof of Theorem~\ref{theorem:RealTruth} in $\HAw+\EM$, we obtain that $\HA^{\omega}+\EM\vdash (t\Vvdash A) \redto A$. We conclude $\HA^{\omega}+\EM\vdash A$. \\

\vspace{-2ex}

\item There are at least two ways to obtain the thesis. On one hand, we may use~(\ref{ris1}) and the standard result about the conservativity of $\HA^{\omega} + \EM$ over $\HA+\EM$ for arithmetical formulas (see for example Troesltra \cite{TroelstraMetamathematical}). On the other hand, we may code directly terms of system $\SystemTClass$ into natural numbers and then express the proofs of point 1) in $\HA+\EM$ (see Section \ref{section-formalization}).
\end{enumerate}

}

\section{Formalization of the Proofs in $\PA$ and in $\HA^{\omega}+\EM$}\label{section-formalization}

\vspace{-1.5ex}

In this section we explain how to formalize in $\PA$ and $\HAw+\EM$ the proof of the Adequacy Theorem~\ref{Realizability Theorem} of Section~\ref{section-ALearningBasedRealizability} and  the proofs of the Computability Theorem~\ref{theorem-computability} and the Soundness Theorem~\ref{theorem:RealTruth} of Section \ref{section-Conservativity}.
We start with the case of $\PA$.

\vspace{-1.8ex}

\subsection{Formalization in $\PA$}\label{sec:formPA}

\vspace{-1.8ex}

One can routinely code in $\PA$ all the concepts we have so far used. As in Tait \cite{Tait}, one may code the terms of $\SystemTClass$ with natural numbers and successively the definition of the realizability and computability predicates with arithmetical formulas. Since neither set-theoretic concepts nor Skolem axioms are employed in any of the given proofs, everything can be coded in $\PA$.

\subsection{Formalization in $\HAw+\EM$}\label{sec:formHAwEM}

\vspace{-1.8ex}

Instead of coding everything into natural numbers, which is of limited practical interest, it is more satisfying to formalize our proofs directly in $\HAw+\EM$. There is no serious obstacle to this end, except for a small formalization issue: the notion $t[s]$ of evaluation of a term $t$ of $\SystemTClass$ in a state $s$, which we have heavily used in the definitions of the realizability and computability predicates, is not directly representable in $\HAw+\EM$.
To begin with, terms of $\SystemTClass$ may contain some constant $\Phic\in\skolemcon$ which does not belong to the language of $\HAw$. This problem is easily solved by considering terms of the form $t[s]$ with $s$ state variable. However, in the definition of Interactive realizability for implication and in the statement of the Computability Theorem one needs to define formulas $x\Vvdash A$ and $x\in \comp{\Nat}$, where $x$ is a variable. In these definitions it is necessary to speak about the substitution of an actual state $s$ in the body of a variable $x$, which is impossible  in $\HAw$ (remember that $x$ represents a term $t[s]$ of $\SystemT$). This last issue is overcome quite easily by considering in place of a term $t:\tau$ in $\SystemTClass$ the term $\lambda s^{\State}. t[s]:\State\redto \tau$, where $\State:=\Nat^{2}\redto\Nat$ is the type of states. In this way, one makes explicit the functional dependence of $t$ from the state $s$ and transforms $t$ into an object having a semantical denotation. It is however necessary to slightly adapt the definitions of realizability and computability, which is what we are going to do.

First, we give an alternative definition of Interactive realizability, which is shown in~\cite{ABF} to be equivalent to
 Kreisel's modified realizability for $\HA^{\omega}$ applied to some  Friedman translation of formulas.
We denote with $\Language$ the restriction of the language $\LanguageClass$ to  the formulas not containing  any Skolem function  constant $\Phic\in\skolemcon$.

\begin{definition}[Alternative Definition of Interactive Realizability]\label{definition-mrsf}
 Assume $s:\State$ is a closed term of $\SystemT$, $t$ is a closed term of $\SystemT$, $D \in \Language $ is a closed formula of $\Language$, and $t:|D|$. We  define by induction on $D$ the relation $t\ \mrsf_s\ D$:
\begin{enumerate}
 \item $t\ \mrsf_s\  Q$ if and only if $t=\upconst{U}$ implies:
 \begin{itemize}
\item  for every  $(i, \vec{n}, m)\in U$,  $\Phic_i=\Phic_A$ for some $A\in\FSet$, and $\truthst{A}{s}(\vec{n}, s_i \code{\vec{n}})=\False$ and $\truthst{A}{s}(\vec{n},m)=\True$.
\item $U  = \varnothing$  implies
$Q={\True}$\\
\end{itemize}

\vspace{-1ex}

\item
$t\ \mrsf_s\ {A\wedge B}$ if and only if $\pi_0t \ \mrsf_s\ {A}$ and $\pi_1t\ \mrsf_s\ {B}$\\

\vspace{-1ex}

\item
$t\ \mrsf_{s} \ {A\vee B}$  if and only if either $\proj_0t={\True}$ and $\proj_1t\ \mrs\ A$, or $\pi_0t={\False}$ and $\proj_1t\ \mrs\ B$ \\

\vspace{-1ex}

\item
$t\ \mrsf_s\  {A\rightarrow B}$ if and only if for all $u$, if $u\ \mrsf_s\ {A}$,
then $tu\ \mrsf_s\ {B}$\\

\vspace{-1ex}

\item
$t\ \mrsf_{s}\ {\forall x^{\tau} A}$ if and only if for all closed terms $u:\tau$ of $\SystemT$,
$tu\ \mrsf_{s}\ A[{u}/x]$\\
\item

\vspace{-1ex}

$t\ \mrsf_{s}\ \exists x^{\tau} A$ if and only for some closed term $u:\tau$ of $\SystemT$, $\pi_0t= u$  and $\pi_1t\ \mrsf_{s}\ A[{u}/x]$

\end{enumerate}

\end{definition}

One can prove straightforwardly, as in \cite{ABF}, that our first Definition~\ref{definition-IndexedRealizabilityAndRealizability} of Interactive realizability is equivalent to this alternative one.

\begin{theorem}[Characterization of Interactive Realizability]\label{theorem-characterization}
Let $t\in \SystemTClass$ and $s$ be a state.
Then, for every  $B\in \LanguageClass$
\[t\,\Vvdash_s\, B\iff t[s]\, \mrsf_{s}\, B[s]\]
\end{theorem}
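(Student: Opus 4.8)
The plan is to prove both implications at once by a single structural induction on the formula $B$, exploiting the fact that the two definitions have term‑for‑term parallel clauses once the evaluation $[s]$ has been pushed inside. Before starting the induction I would record the routine commutation facts that make the clauses line up, each of which holds simply because $[s]$ only rewrites the Skolem constants $\Phic_i\in\skolemcon$ into the $\SystemT$‑terms $s_i$: that $[s]$ is a homomorphism for application and projection, $(tu)[s]=t[s]\,u[s]$ and $(\pi_i t)[s]=\pi_i(t[s])$; that it commutes with the logical structure of a formula, so $(A\star B)[s]=A[s]\star B[s]$ for every connective $\star$ and $(Q x^{\tau} A)[s]=Q x^{\tau}(A[s])$ for $Q\in\{\forall,\exists\}$; that $v[s]=v$ for every $v\in\SystemT$, since such a term contains no constant of $\skolemcon$; and that $[s]$ commutes with substitution of an $\SystemT$‑term, $(A[u/x])[s]=(A[s])[u/x]$ whenever $u\in\SystemT$. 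In particular $B[s]\in\Language$ and $t[s]\in\SystemT$, so they are exactly the objects the alternative relation $\mrsf_s$ inspects.

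With these in hand the atomic case is immediate: $t\Vvdash_s Q$ asserts by definition that whenever $t[s]$ equals an update constant $\upconst{U}$ the soundness conditions on $U$ hold and $U=\varnothing$ forces $Q[s]=\True$, which is verbatim the clause defining $t[s]\mrsf_s Q[s]$. For the pairing clauses $\wedge$ and $\sub$, whose conditions are the realizability of the projections $\pi_0 t$ for the first component and $\pi_1 t$ for the second (for $\sub$, of $B^{\bot}$), I would apply the induction hypothesis to each immediate subformula and rewrite $(\pi_i t)[s]=\pi_i(t[s])$, turning the $\Vvdash_s$ clause into the $\mrsf_s$ clause directly. The $\vee$ clause is handled the same way, using in addition that the boolean selector satisfies $\proj_0 t[s]=\proj_0(t[s])$ so that the $\True/\False$ case split is preserved. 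The quantifier cases are equally direct because \emph{both} relations range $u$ over closed $\SystemT$‑terms: for $\exists$ I use $\pi_0 t[s]=\pi_0(t[s])$ for the witness and, for the body, the induction hypothesis together with $(A[u/x])[s]=(A[s])[u/x]$; the $\forall$ case is identical without the witness.

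The one clause that is not a purely formal rewriting — and the step I expect to be the main obstacle — is implication, because $t\Vvdash_s A\rightarrow B$ quantifies its hypothesis $u$ over all of $\SystemTClass$, whereas $t[s]\mrsf_s A[s]\rightarrow B[s]$ quantifies over $\SystemT$ only. To reconcile the two ranges I would use the inclusion $\SystemT\subseteq\SystemTClass$ together with the evaluation facts. For the forward direction, given $v\in\SystemT$ with $v\mrsf_s A[s]$, note $v[s]=v$, so by the induction hypothesis $v\Vvdash_s A$; the assumption yields $tv\Vvdash_s B$, and a second use of the induction hypothesis gives $(tv)[s]=t[s]\,v\mrsf_s B[s]$. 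For the converse, given any $u\in\SystemTClass$ with $u\Vvdash_s A$, the induction hypothesis gives $u[s]\mrsf_s A[s]$ with $u[s]\in\SystemT$, so the hypothesis on $t[s]$ yields $t[s](u[s])=(tu)[s]\mrsf_s B[s]$, whence $tu\Vvdash_s B$. Thus the crux is just that $u\mapsto u[s]$ carries closed $\SystemTClass$‑terms onto the closed $\SystemT$‑terms the alternative relation examines while preserving the induction hypothesis; once this is observed the implication clause closes, and with it the whole induction. The argument also tacitly relies on Proposition~\ref{proposition-saturation}, which guarantees that $\Vvdash_s$ depends only on $t[s]$, so that evaluating before or after testing realizability makes no difference.
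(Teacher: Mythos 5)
Your proposal is correct and follows essentially the same route as the paper's own argument: a routine structural induction on $B$ in which every clause is a formal rewriting via the commutation of $[s]$ with application, projection and substitution, and the only delicate case is implication, resolved exactly as you do by noting that $u=u[s]$ for closed $u\in\SystemT$ and that $u\mapsto u[s]$ sends closed $\SystemTClass$-terms to the closed $\SystemT$-terms over which $\mrsf_s$ quantifies. (Your closing appeal to Proposition~\ref{proposition-saturation} is not actually needed, since the induction hypothesis applied to the term $tu$ already yields the required equivalence, but this is harmless.)
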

\comment{
\begin{proof} The thesis is proved by routine induction on $B$.
\begin{enumerate}
\item $B=Q$, with $Q$ atomic. Then $t[s]\ \mrsf_{s}\ Q[s]$, by Definition \ref{definition-mrsf}, holds if and only if:
 \begin{itemize}
 \item $t[s]=\upconst{U}$ implies  that for all $(n,m,l)\in U$,  $\tkleene nml=\True$ and  $\tkleene n m s(n,m)=\False$
\item $t[s]  = \varnothing$  implies
$Q[s]={\True}$\\
\end{itemize}
Indeed, this is exactly the definition  \ref{definition-IndexedRealizabilityAndRealizability} of $t\Vvdash_{s} Q$, provided one makes some additional hypothesis on the enumeration $\empredicate{0}, \empredicate{1}, \ldots $ of definition \ref{definition-StateOfKnowledge}. For example, it is enough to assume that for each numeral $n$, $\empredicate{n}= \tkleene n$. 
\\

\item $B=C\land D$. Then $t\Vvdash_{s} C\land D$ if and only if $\pi_{0}t\Vvdash_{s} C$ and $\pi_{1}t\Vvdash_{s} D$ if and only if (by induction hypothesis) $\pi_{0}t[s]\ \mrsf_{s}\ C[s]$ and $\pi_{1}t[s]\ \mrsf_{s}\ D[s]$ if and only if $t[s]\ \mrsf_{s} (C\land D)[s]$.\\

 \item $B=C\lor D$. Assume $\proj_{0}t[s]=\True$ (the case $\proj_{0}t[s]=\False$ is symmetrical). Then, $t\Vvdash_{s} C\lor D$ if and only if  $\proj_{1}t\Vvdash_{s} C$ if and only if (by induction hypothesis) $\proj_{1}t[s]\ \mrsf_{s}\ C[s]$ if and only if  $t[s]\ \mrsf \ (C\lor D)[s]$ by the very definition \ref{definition-mrsf} of $\mrsf_{s}$. \\

\item $B=C\rightarrow D$. Assume $t\Vvdash_{s} C\rightarrow D$. We want to prove that $t[s]\Vvdash_{s} (C\rightarrow D)[s]$. Thus, we have to suppose $u\ \mrsf_{s}\ C[s]$ and conclude that $t[s]u\ \mrsf_{s}\ D[s]$. Since $u=u[s]$ ($u$ is a closed term of $\SystemT$ by definition \ref{definition-mrsf}), by induction hypothesis we obtain that $u\ \mrsf_{s}\ C$ and hence that $tu\Vvdash_{s} D$. By induction hypothesis, $t[s]u=tu[s]\ \mrsf_{s}\ D[s]$, which is what we wanted to show.  \\  Conversely, assume $t\ \mrsf_{s}\ (C\rightarrow D)[s]$. We want to prove that $t\Vvdash_{s} C\rightarrow D$. Thus, we have to suppose $u\ \mrsf_{s}\ C$ and conclude that $tu\ \mrsf_{s}\ D$. By induction hypothesis, we obtain that $u[s]\ \mrsf_{s}\ C[s]$ and hence that $tu[s]\Vvdash_{s} D[s]$. By induction hypothesis again, $tu\ \mrsf_{s}\ D$, which is what we wanted to show.  \\

\item $B=\forall x^{\tau} C$. Assume $t\Vvdash_{s}\forall x^{\tau} C$ and let $u: \tau$ an arbitrary closed term of $\SystemT$. Then $tu\Vvdash_{s} C[u/x]$ and by induction hypothesis $t[s]u=tu[s]\ \mrsf_{s}\ C[u/x][s]=C[s][u/x]$. We have thus proved that $t[s]\ \mrsf_{s}\ \forall x^{\tau} C[s]$. Similarly, one proves that $t[s]\ \mrsf_{s}\ \forall x^{\tau} C[s]$ implies $t\Vvdash_{s}\forall x^{\tau} C$. \\

\item $B=\exists x^{\tau} C$. Assume $\proj_{0}t[s]=u$. Then $t\Vvdash_{s} \exists x^{\tau} C$ if and only if $\proj_{1}t\Vvdash_{s} C[u/x]$ if and only (by induction hypothesis) $\proj_{1}t[s]\ \mrsf_{s}\ C[u/x][s]=C[s][u/x]$ if and only if $t[s]\ \mrsf_{s}\ \exists x^{\tau} C[s]$.

 \end{enumerate}
\end{proof}
}

Theorem~\ref{theorem-characterization} allows us to replace in  our conservativity proof the expression $t\Vvdash A$ with the expression $\forall s^{\State}.\, t[s]\, \mrsf_{s}\, A[s]$, which is a formula of $\HAw$. Moreover, the Adequacy Theorem for $\Vdash$ is formalizable in $\HAw$, since it is a special case of the Adequacy Theorem for modified realizability, which is formalizable in that system (see \cite{TroelstraHandBook}).

\newcommand{\app}[2]{{#1}\cdot{#2}}
Secondly, we adapt the notion of computability to terms of type $\State\redto\tau$.  For every pair of terms $t,u\in\SystemT$ respectively of type $\State\redto(\sigma\redto\tau)$ and $\State\redto\sigma$, we define the following notion of application:

\vspace{-2ex}

$$\app{t}{u}:= \lambda s^{\State}. ts(us)$$

\newcommand{\pr}{\uppi}

For every term $t\in\SystemT$ of type $\State\redto(\tau_0\times\tau_1)$ and $i\in\{0,1\}$, we define the following notion of projection:
$$\pr_i t:=  \lambda s^{\State}. \pi_{i}  ts$$

\newcommand{\starc}[1]{#1^{*}}
 Finally, for every constant term $\mathsf{c}\notin\skolemcon$,  we define $\starc{\mathsf{c}}:=\lambda s^{\State} \mathsf{c}$.
We now adapt Definition~\ref{definition-defstate} and Definition~\ref{definition-Computability}. Since there is no possibility of confusion, we maintain the same notations of Section \ref{section-Conservativity} but with the new specified meaning.

\begin{definition}[Definition of a term in a state $s$]~\label{definition-defstate2} For every state $s$ and  term $t:\State\redto\tau$ of $\SystemT$ with $\tau$ atomic type, we define  $t\tdef{s}$ (and we say ``$t$ is defined in $s$'') as the statement: for all states $s'\geq s$, $ts'=ts$.
\end{definition}

\begin{definition}[Computable terms]\label{definition-Computability2}\mbox{}

For every type $\tau$ of $\SystemT$, we define a set of closed terms of $\SystemT$ of type $\State\redto\tau$ as follows:

\begin{itemize}
\item $\comp{\Nat}$=$\{t:\State\redto\Nat\;|$  \mbox{for all states  $s$ there  is a state  $s'\geq s$ such that $t\tdef{s'}$} $\}$ \\

\vspace{-1.5ex}

\item $\comp{\Bool}$=$\{t:\State\redto\Bool\;|$  \mbox{for all states  $s$ there  is a state  $s'\geq s$ such that  $t\tdef{s'}$} $\}$ \\

\vspace{-1.5ex}

\item $\comp{\Update}$=$\{t:\State\redto\Update\;|$  \mbox{for all states  $s$ there  is a state  $s'\geq s$ such that  $t\tdef{s'}$} $\}$ \\

\vspace{-1.5ex}

\item $\comp{\tau\redto \sigma}$=$\{t\;|\; \forall u\in\comp{\tau}\hspace{2pt} \;\app{t}{u}\in\comp{\sigma}\} $ \\

\vspace{-1.5ex}

\item $\comp{\tau\times \sigma}$=$\{t\;|\; \pr_{0}t\in\comp{\tau} \mbox{and } \pr_{1}t\in\comp{\sigma}\} $

\end{itemize}
\end{definition}

The proofs of  Lemma~\ref{lemma:saturation} and of the Computability Theorem  can be easily adapted (for details, see the full version of this paper \cite{RattoRatto}).


\begin{lemma}\label{lemma:saturation2}
 For every term $t:\State\redto\rho$ of $\SystemT$, if for every state $s$ there exists a state $s'\geq s$ and $u\in\comp{\rho}$ such that for all states $s''\geq s'$, $ts''=us''$, then $t\in\comp{\rho}$.
\end{lemma}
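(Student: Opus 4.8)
The plan is to proceed exactly as in the proof of Lemma~\ref{lemma:saturation}, by structural induction on the type $\rho$, with three systematic replacements dictated by the move to the functionalized setting: evaluation $t[s]$ becomes the application $ts$ of the term $t:\State\redto\rho$ to the state $s$, syntactic application becomes $\app{t}{u}$, and projection becomes $\pr_i t$. Since the statement and the definition of $\comp{\rho}$ are word-for-word the same as before modulo these replacements, I expect the whole argument to transfer verbatim.

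For the base cases $\rho\in\{\Nat,\Bool,\Update\}$ I would fix a state $s$ and use the hypothesis on $t$ to obtain $s'\geq s$ and $u\in\comp{\rho}$ with $ts''=us''$ for every $s''\geq s'$. Because $u$ is computable at an atomic type, there is $s''\geq s'$ with $u\tdef{s''}$; setting $r:=s''$, for any $r'\geq r$ one has $ur'=ur$ by $u\tdef{r}$, together with $tr'=ur'$ and $tr=ur$ since $r,r'\geq s'$, whence $tr'=tr$ and $t\tdef{r}$. As $s$ was arbitrary this yields $t\in\comp{\rho}$.

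For $\rho=\tau\redto\sigma$ I would fix $v\in\comp{\tau}$ and show $\app{t}{v}\in\comp{\sigma}$ by verifying that $\app{t}{v}$ meets the hypothesis of the lemma at the smaller type $\sigma$, so that the induction hypothesis closes the case. Given a state $s$, the hypothesis on $t$ supplies $s'\geq s$ and $u\in\comp{\tau\redto\sigma}$ with $ts''=us''$ for all $s''\geq s'$; then $(\app{t}{v})s''=ts''(vs'')=us''(vs'')=(\app{u}{v})s''$, and $\app{u}{v}\in\comp{\sigma}$ by definition of $\comp{\tau\redto\sigma}$, so $\app{u}{v}$ is the required witness. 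The product case $\rho=\tau_0\times\tau_1$ is symmetric: for $i\in\{0,1\}$ one has $(\pr_i t)s''=\pi_i(ts'')=\pi_i(us'')=(\pr_i u)s''$ with $\pr_i u\in\comp{\tau_i}$, and the induction hypothesis gives $\pr_i t\in\comp{\tau_i}$, hence $t\in\comp{\tau_0\times\tau_1}$.

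The only point needing care — hardly an obstacle — is checking that the defining equations $\app{t}{v}=\lambda s^{\State}.\,ts(vs)$ and $\pr_i t=\lambda s^{\State}.\,\pi_i\,ts$ make these operations commute with evaluation at a state, so that the pointwise equality $ts''=us''$ lifts to $(\app{t}{v})s''=(\app{u}{v})s''$ and $(\pr_i t)s''=(\pr_i u)s''$. This is precisely what allows the inductive argument of Lemma~\ref{lemma:saturation} to be replayed unchanged in the new setting.
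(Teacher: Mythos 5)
Your proof is correct and takes essentially the same route as the paper's: induction on the type $\rho$, with the atomic cases handled by transferring definedness from the witness $u$ to $t$ on states above $s'$, and the arrow and product cases reduced to the induction hypothesis via the pointwise equations $(\app{t}{v})s''=(\app{u}{v})s''$ and $(\pr_i t)s''=(\pr_i u)s''$. Nothing to add.
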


\comment{

\begin{proof}
By induction on the type $\rho$.

\begin{itemize}
\item $\rho =\Nat$.  Let $s$ be a state. We have to show that there exists a state $r\geq s$ such that $t\tdef{r}$.
By assumption on $t$ there exists a state $s'\geq s$ and  $u\in\comp{\Nat}$ such that for all $s''\geq s'$, $ts''=us''$. Thus,  there exists $s''\geq s'$ such that $u\tdef{s''}$. Let $r=s''$; we  prove  $t\tdef{r}$. Let $r'\geq r$. We have that $ur'=ur$, by $u\tdef{r}$, and $tr'=ur'$, since $r'\geq s'$. Hence, $tr'=ur=tr$.  We conclude $t\tdef{r}$ and finally $t\in\comp{\rho}$.\\

\item $\rho =\Bool,\Update$: as for the case $\rho =\Nat$.\\

\item $\rho =\tau\redto \sigma$. Let $v\in\comp{\tau}$. We have to show that $\app{t}{v}\in\comp{\sigma}$. Let $s$ be any state. By assumption on $t$ there exists a state $s'\geq s$ and  $u\in\comp{\tau\redto\sigma}$ such that for all $s''\geq s'$, $ts''=us''$. Therefore for all $s''\geq s'$, $$(\app{t}{v}){s''}=ts''(v{s''})=us''(vs'')=(\app{u}{v})s''$$ and $\app{u}{v}\in\comp{\sigma}$. Hence, by induction hypothesis, $\app{t}{v}\in\comp{\sigma}$.\\

\item $\rho =\tau_0\times \tau_1$. Let $i\in\{0,1\}$, we have to show that $\pr_i t\in\comp{\tau_i}$. Let $s$ be any state. By assumption on $t$ there exists $s'\geq s$  and $u\in\comp{\tau_0\times \tau_1}$ such that  for all $s''\geq s'$, $t{s''}=us''$. Therefore for all $s''\geq s'$,
$$(\pr_{i}t){s''}=\pi_i({t}s'')=\pi_{i}(us'')=(\pr_i u)s''$$
and $\pr_{i}u\in\comp{\tau_i}$. Hence, by induction hypothesis $\pr_{i}t\in\comp{\tau_i}$.
\end{itemize}

\end{proof}

}


\begin{theorem}[Computability Theorem]\label{theorem-computability2}\mbox{}

\noindent
Let $v:\tau$ be a term of $\SystemTClass$ and suppose that all the free variables of $v$ are among $x_1^{\sigma_1},\ldots, x_{n}^{\sigma_n}$.
If $t_{1}\in\comp{\sigma_1}, \ldots, t_{n}\in\comp{\sigma_n}$, then $\lambda s^{\State}.v[s][t_{1}s/x^{\sigma_1}_{1},\ldots, t_{n}s/x^{\sigma_n}_{n}]\in\comp{\tau}$.
\end{theorem}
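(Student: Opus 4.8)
The plan is to prove the theorem by induction on the structure of $v$, following exactly the pattern of the proof of Theorem~\ref{theorem-computability}, but replacing ordinary application, projection and the plug-in of a state by their state-explicit counterparts $\cdot$, $\pr_i$ and $\starc{(-)}$, and invoking Lemma~\ref{lemma:saturation2} wherever the first proof invoked its analogue. Throughout, for a subterm $w$ of $v$ I abbreviate $\widehat{w}:=\lambda s^{\State}.\,w[s][t_1 s/x_1^{\sigma_1},\ldots,t_n s/x_n^{\sigma_n}]$, so that the goal becomes $\widehat{v}\in\comp{\tau}$. The entire adaptation rests on three equalities that I would record first: $\widehat{uw}=\app{\widehat{u}}{\widehat{w}}$, $\widehat{\pi_i u}=\pr_i\widehat{u}$, and $\widehat{\mathsf c}=\starc{\mathsf c}$ for every constant $\mathsf c\notin\skolemcon$. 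These say precisely that the hatted operations commute with the term constructors, and they are what let every inductive case collapse onto the corresponding case of Theorem~\ref{theorem-computability}.

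For the base and structural cases the argument is then immediate. If $v$ is the variable $x_i$, then $\widehat{v}=\lambda s^{\State}.\,t_i s=t_i\in\comp{\sigma_i}$ after an $\eta$-contraction, and if $v$ is one of $0,\True,\False,\upconst{U}$ then $\widehat v=\starc{v}$ is state-independent, hence defined in every $s$, so $\widehat v\in\comp{\tau}$ trivially. For $v=uw$ I rewrite $\widehat v=\app{\widehat u}{\widehat w}$ and use $\widehat u\in\comp{\sigma\redto\tau}$, $\widehat w\in\comp{\sigma}$; for $v=\pi_i u$ I rewrite $\widehat v=\pr_i\widehat u$ and use $\widehat u\in\comp{\tau_0\times\tau_1}$; for $v=\langle u,w\rangle$ the projections $\pr_0\widehat v,\pr_1\widehat v$ reduce to $\widehat u,\widehat w$; and for $v=\lambda x.u$ I check that $\app{\widehat v}{t}=\lambda s^{\State}.\,u[s][\ldots,ts/x]$ for every $t\in\comp{\tau_1}$, which is the induction hypothesis for $u$ in the context extended by $x$. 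In each of these the induction hypothesis and the definition of $\comp{-}$ close the case.

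The cases in which $v$ is one of the $\SystemT$-constants $\ifn_\tau$, $\rec_\tau$, $\minu$, $\Cup$, $\suc$, $\mkupd$ or $\get$ go through exactly as in Theorem~\ref{theorem-computability}, now reading $v$ as $\starc{v}$ and every application through $\cdot$: for a given $s$ one chooses an extension $s'\geq s$ at which the relevant arguments are defined, checks that on all $s''\geq s'$ the term $\widehat v$ applied to its arguments agrees with a term already known to be computable (a selected branch for $\ifn_\tau$; the term $\app{\app{\app{\starc{\rec_\tau}}{u}}{w}}{\starc n}$ for $\rec_\tau$, after an auxiliary induction on the numeral $n$; a fixed update for $\minu$ and $\Cup$), and then appeals to Lemma~\ref{lemma:saturation2}.

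The only case carrying genuine content, and the one I expect to be the main obstacle, is $v=\Phic_i\in\skolemcon$, which is also the sole place where $\EM$ enters. Here $\widehat{\Phic_i}=\lambda s^{\State}.\,s_i$ is genuinely state-dependent, so I must show $\app{\widehat{\Phic_i}}{u}=\lambda s^{\State}.\,s_i(us)\in\comp{\Nat}$ for every $u\in\comp{\Nat}$. Given $s$, I would pick $s'\geq s$ with $u\tdef{s'}$, put $n=us'$ and $m=s'_i(n)$, and then argue exactly as in the Skolem case of Theorem~\ref{theorem-computability}: if the formula $A$ associated to $\Phic_i$ lies outside $\FSet$, or $A(n,m)$ holds, or $A(n,\cdot)$ has no witness at all, then $(i,n)\in\dm{s'}$ and $\app{\widehat{\Phic_i}}{u}$ is already defined in $s'$; otherwise I invoke $\EM$ to decide whether a witness $l$ for $A(n,\cdot)$ exists and, if it does, extend $s'$ to the state $s''$ that corrects its value at $(i,n)$ to $l$, so that $(i,n)\in\dm{s''}$ and $\app{\widehat{\Phic_i}}{u}\tdef{s''}$. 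In every branch Lemma~\ref{lemma:saturation2} yields the claim. The one point requiring care is purely bookkeeping: one must verify that the corrected state satisfies $s''\geq s'$, which holds because $A(n,m)$ is false at the single overwritten argument so that $(i,n)\notin\dm{s'}$, and that $u\tdef{s'}$ guarantees $u$ still evaluates to $n$ on every $s'''\geq s''$, so that $\app{\widehat{\Phic_i}}{u}$ returns the intended value $l$ there.
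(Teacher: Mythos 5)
Your proposal is correct and follows essentially the same route as the paper's own (referenced) proof: induction on $v$, with the observation that the map $w\mapsto\lambda s^{\State}.\,w[s][t_1s/x_1,\ldots,t_ns/x_n]$ commutes with application, projection and constants via $\app{}{}$, $\pr_i$ and $\starc{(\,)}$, each case then collapsing onto the corresponding case of Theorem~\ref{theorem-computability} with Lemma~\ref{lemma:saturation2} in place of Lemma~\ref{lemma:saturation}, and $\EM$ used only for the Skolem constants. Your bookkeeping in the Skolem case (in particular that $(i,n)\notin\dm{s'}$ when $A(n,m)$ is false and a witness exists, so the corrected state indeed satisfies $s''\geq s'$) matches the intended argument.
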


\comment{

\begin{proof} We proceed by induction on $v$.
\begin{notation}

 For any term $w$ in $\SystemTClass$, we denote $\lambda s^{\State}.w[s][t_{1}s/x_1^{\sigma_1},\ldots, t_{n}s/x^{\sigma_n}]$ with $\overline{w}$.
\end{notation}
\begin{enumerate}
\item $v$ is a variable $x_i^{\sigma_{i}}:\sigma_{i}$ and $\tau=\sigma_i$. So $\sbs{v}=\lambda s^{\State}. t_i s$. Since  for all states $s$, $\sbs{v}s=t_{i}s$ and $t_i\in\comp{\sigma_i}$, by Lemma~\ref{lemma:saturation2}, $\lambda s^{\State}. t_i s\in\comp{\sigma_i}$.

\vspace{2.5ex}

\item $v$ is $0$, $\True$, $\False$, $\upconst{U}$: trivial.

\vspace{2.5ex}

\item  $v$ is $uw$, then by means of typing rules, $u:\sigma\redto\tau$, $w:\sigma$. Since by induction hypothesis $ \overline{u}\in\comp{\sigma\redto\tau}$ and  $ \overline{w}\in\comp{\sigma}$, we obtain $\app{\overline{u}}{\overline{w}}\in\comp{\tau}$. Moreover,
$$\overline{v}=\lambda s^\State.\overline{u}s(\overline{w}s)=\app{\overline{u}}{\overline{w}}\in\comp{\tau}$$

By Lemma \ref{lemma:saturation2}, we obtain $\sbs{v}\in\comp{\tau}$.

\vspace{2.5ex}

\item $v$ is $\lambda x^{\tau_{1}}. u:\tau_{1}\redto\tau_{2}$. Then, by means of typing rules, $u:\tau_{2}$. Suppose now, for a term $t:\State\redto\tau_1$ in $\SystemT$, that $t\in\comp{\tau_1}$. We have to prove that $\app{ \sbs{v}}{t}\in\comp{\tau_2}$. We have:

\[\begin{aligned}
\app{ \sbs{v}}{t}&=
\app{\left(\lambda s^\State.(\lambda x^{\tau_1}. u[s])[t_1s/x_1^{\sigma_1}\cdots t_ns/x_n^{\sigma_n} ]\right)}{t}
\\
&=\lambda s^\State.{(\lambda x^{\tau_1}. u[s])[t_1s/x_1^{\sigma_1}\cdots t_ns/x_n^{\sigma_n} ]}{(ts)}
\\
&= \lambda s^\State.(\lambda x^{\tau_1} u[s])(ts)[t_1s/x_1^{\sigma_1}\cdots t_ns/x_n^{\sigma_n}]
\\
&=\lambda s^\State.u[s][ts/x^{\tau_1}][t_1s/x_1^{\sigma_1}\cdots t_ns/x_n^{\sigma_n}]
\end{aligned}
\]

By induction hypothesis, this latter term is in $\comp{\tau_2}$.
  By Lemma~\ref{lemma:saturation2} we conclude $\app{\sbs{v}}{t}\in\comp{\tau_2}$.

\vspace{2.5ex}

\item $v$ is $\langle u,w\rangle:\tau_0\times\tau_1$. By means of typing rules, $u:\tau_0$, $w:\tau_1$ and by induction hypothesis $\pr_0\sbs{v}=\lambda s^\State.\pi_0 (\substitution{v}s)=\sbs{u}\in\comp{\tau_0}$ and $\pr_1\sbs{v}=\lambda s^\State.\pi_1 (\substitution{v}s) =\sbs{w}\in\comp{\tau_1}$. The thesis  $\sbs{v}\in\comp{\tau_0\times\tau_1}$ follows by definition.

\vspace{2.5ex}

\item $v$ is $\pi_i(u):\tau_i$, $i\in\{0,1\}$, where $u: \tau_0\times \tau_1$. Then   ${\pr_i \sbs{u}}\in\comp{\tau_i}$ because $\sbs{u}\in\comp{\tau_0\times \tau_1}$ by induction hypothesis.  Moreover,
$$\sbs{v}=\lambda s^{\State}.\pi_i(\sbs{u}s)={\pr_i \sbs{u}}$$
By Lemma \ref{lemma:saturation2}, we obtain $\sbs{v}\in\comp{\tau_i}$.

\vspace{2.5ex}

\item $v$ is $\ifn_{\tau}:\Bool\rightarrow \tau\rightarrow  \tau\rightarrow \tau$. Suppose that $u\in\comp{\Bool}$, $u_1\in\comp{\tau}$, $u_2\in\comp{\tau}$. Then, for all states $s$ there exists $s'\geq s$ such that $u\tdef{s'} $. We have to prove that $\app{\app{\app{\starc{\ifn}_{\tau}}{u}}{u_{1}}}{u_{2}}\in{\comp{\tau}}$.

 Let $s$ be a state and let $s'\geq s$ be such that $u\tdef{s'}$. If $us'=\True$, then for all $s''\geq s'$,
 $$\left(\app{\app{\app{\starc{\ifn}_{\tau}}{u}}{u_{1}}}{u_{2}}\right)s''=\ifn_{\tau} (us'')(u_{1}s'') (u_{2}s'')=u_1s''$$ and ${u_1}\in\comp{\tau}$. If $us'=\False$, then for all $s''\geq s'$,
 $$\left(\app{\app{\app{\starc{\ifn}_{\tau}}{u}}{u_{1}}}{u_{2}}\right)s''= \ifn_{\tau} (us'')(u_{1}s'') (u_{2}s'')=u_2s''$$
 and ${u_2}\in\comp{\tau}$.  By Lemma~\ref{lemma:saturation2}, we conclude $\app{\app{\app{\starc{\ifn}_{\tau}}{u}}{u_{1}}}{u_{2}}\in{\comp{\tau}}$.
\vspace{2.5ex}

\item $v$ is $\rec_{\tau}: \tau \rightarrow (\Nat \rightarrow (\tau \rightarrow \tau))\rightarrow \Nat\rightarrow \tau$. Suppose that $u\in\comp{\tau}$, $w\in\comp{\Nat \rightarrow (\tau \rightarrow \tau)}$, $z\in\comp{\Nat}$. We have to prove that $\app{\app{\app{\starc{\rec}_{\tau}}{u}}{w}}{z}\in\comp{\tau}$.
By a plain induction, it is possible to prove, for each numeral $n$, $\app{\app{\app{\starc{\rec}_{\tau}}{u}}{w}}{\starc{n}}\in\comp{\tau}$.
Let $s$ be a state and let $s'\geq s$ be such that $z\tdef{s'}$.  Let $zs'=n$ with $n$ numeral. Then for all $s''\geq s'$,
$$\left(\app{\app{\app{\starc{\rec}_{\tau}}{u}}{w}}{z}\right)s''=\rec_{\tau}(us'')(vs'')(zs'')=\rec_{\tau}(us'')(vs'')n=(\app{\app{\app{\starc{\rec}_{\tau}}{u}}{w}}{\starc{n}})s''$$ By Lemma~\ref{lemma:saturation2}, we conclude $\app{\app{\app{\starc{\rec}_{\tau}}{u}}{w}}{z}\in\comp{\tau}$.

\vspace{2.5ex}

\item $v$ is $\minu:\Update\rightarrow \Nat$. Suppose  that $u\in\comp{U}$. Let $s$ be a state. Since $u\in\comp{U}$, there exists $s'\geq s$ such that $u\tdef{s'}$.  We have to prove that $\app{\starc{\minu}}{u}\in{\comp{\Nat}}$.  Let be  $us'=\upconst{U}$, with  $\upconst{U}$ update.
For all  $s''\geq s'$:

$$(\app{\starc{\minu}}{ u})s''= \minu (us'')=\minu\,{\upconst{U}}=n$$

for some numeral $n$. By definition of $\comp{\Nat}$, $\app{\starc{\minu}}{u}\in{\comp{\Nat}}$.

\vspace{2.5ex}

\item $v$ is $\Cup: \Update\rightarrow \Update\rightarrow \Update$. Suppose that $u_1\in\comp{\Update}$ and $u_2\in\comp{\Update}$. We have to prove that $\app{\app{\starc{\Cup}}{u_1}}{u_2}\in\comp{\Update}$. Let $s$ be a state. Since $u_1\in\comp{\Update}$ there exists $s'\geq s$  such that $u_1\tdef{s'}$.
Since $u_2\in\comp{\Update}$,  there exists $s''\geq s'$ such that $u_2\tdef{s''}$. Therefore, there exist two constants $\upconst{U}_{1}$ and $\upconst{U}_{2}$ such that for all $s'''\geq s''$, $u_1s'''=\upconst{U}_{1}$ and  $u_2s'''=\upconst{U}_2$. Finally, for all $s'''\geq s''$, $$\left(\app{\app{\starc{\Cup}}{u_1}}{u_2}\right)s'''=\Cup(u_1 s''')(u_2s''')=\Cup\; \upconst{U}_1 \upconst{U}_2=\upconst{U}_3$$ for some update constant $\upconst{U}_3$ . By definition of $\comp{\Update}$, $\app{\app{\starc{\Cup}}{u_1}}{u_2}\in\comp{\Update}$.

\vspace{2.5ex}

\item $v$ is  $\suc$, $\mkupd$ or  $\get$. The proof is similar to the one of the previous case.

\vspace{2.5ex}

\item $v$ is a constant  $\Phic_{i}:\Nat\redto\Nat$ in $\skolemcon$. Suppose now, for a term $u:\Nat$, that  $u\in\comp{\Nat}$. We have to prove that $\sbs{\Phic}_i=\app{(\lambda s^\State.s_i )}{u}\in\comp{\Nat}$. Let $s$ be a state. We must show that there exists a $s'\geq s$ such that $\app{(\lambda s^\State.s_i )}{u} \tdef{s'}$. Since $u\in\comp{\Nat}$,  there exists a state $s' \geq s$ such that $u\tdef{s'}$. Let  $n=us'$, with $n$ numeral, and $m=s'_{i}(n)$. Let $\Phic_i=\Phic_{A(x,y)}$. If $A\notin\FSet$, then trivially $(i,n)\in \dm{s'}$ by Definition \ref{definition-soundupdates}. Therefore for all $s''\geq s'$, $(\app{(\lambda s^\State.s_i )}{u})s''=s''_{i}(n)=m$ and we are done. Hence, we may assume $A\in\FSet$. There are two cases, and this is the only point of this proof in which we use $\EM$.

\begin{enumerate}\item
$A(n,m)$ is true. Therefore, for all $s''\geq s'$, $s''_i(n)=m$ because $(i,n)\in\dm{s'}$. Thus, for all $s''\geq s'$, $(\app{(\lambda s^\State.s_i )}{u})s''=s''_{i}(n)=m$, which is the thesis. \\

 \item
 $A (n,m)$ is false. If there exists $l$ such that  $A(n,l)$ is true, then let

$$s'':=\lambda x^{\Nat}\lambda y^{\Nat}.\, \ifthen{x=i\land_{\Bool} y={{n}}}{m}{s'_{x}(y)}$$

Then, for all $s'''\geq s''$, $s'''_{i}(n)=l$ because $(i, l)\in\dm{s''}$. Thus, for all $s'''\geq s''$, $$(\app{(\lambda s^\State.s_i )}{u})s''=s''_{i}(n)=m$$ which is the thesis.
If there is no $l$ such that  $A(n,l)$ is true, then trivially $(i,n)\in\dm{s'}$. Thus for all $s''\geq s'$, $(\app{(\lambda s^\State.s_i )}{u})s''=s''_{i}(n)=m$ and we are done.
\end{enumerate}

\end{enumerate}
\end{proof}

}

The proofs of Proposition \ref{proposition-computabletruth} and  Theorem~\ref{theorem-consistency} remain exactly the same, while the proof of Theorem~\ref{theorem:RealTruth} can be straightforwardly adapted. In particular, in the base case of the induction one needs to prove that a term $t$, possibly with free variables of type $\Nat$, is computable. This follows from Theorem~\ref{theorem-computability2} and the fact that it is possible to prove by induction the statement  $\forall x^{\Nat}. \;\lambda s^{\State} x\in\comp{\Nat}$.


\comment{

}



\begin{thebibliography}{}

{

{

\vspace{-2.5ex}

{

\bibitem{AB} F. Aschieri, S. Berardi, \emph{Interactive Learning-Based Realizability for Heyting Arithmetic with $\EM_1$},
Log. Methods Comput. Sci., 2010.
\doi{10.2168/LMCS-6(3:19)2010}.



\bibitem{Aschieriupdate} \hspace{-1.5ex}  F. Aschieri,
\emph{Transfinite Update Procedures for Predicative Systems of Analysis},
Proc. of CSL, 2011.

\bibitem{Aschierilearning}  \hspace{-1.5ex} F. Aschieri,
\emph{A Constructive Analysis of Learning in Peano Arithmetic},
Ann. Pure Appl. Log., 
\doi{10.1016/j.apal.2011.12.004}.

\bibitem{ABF}  \hspace{-1.5ex} F. Aschieri, S. Berardi, \emph{A New Use of Friedman's Translation: Interactive Realizability}, Festschrift of Helmut Schwichtenberg, Ontos in Math. Log., to appear.

\bibitem{Aschierigames}  \hspace{-1.5ex} F. Aschieri,
\emph{Learning Based Realizability for HA + EM1 and 1-Backtracking Games: Soundness and Completeness},
Ann. Pure Appl. Log., to appear.
\doi{10.1016/j.apal.2012.05.002}.

\bibitem{AschieriHAS} \hspace{-1.5ex}  F. Aschieri, \emph{Interactive Realizability for Second-Order Heyting Arithmetic with $\EM_1$ and $\SK_{1}$}, preprint, http://hal.inria.fr/hal-00657054.

\bibitem{AschieriPA}  \hspace{-1.5ex}  F. Aschieri, \emph{Interactive Realizability for Classical Peano Arithmetic with Skolem Axioms}, Tech. Rep., http://hal.inria.fr/hal-00685360.

\bibitem{RattoRatto}  \hspace{-1.5ex}  F. Aschieri, M. Zorzi, \emph{Eliminating Skolem Functions in Peano Arithmetic with Interactive Realizability}, Tech. Rep.,  http://hal.inria.fr/hal-00690270.

\bibitem{Avigad} \hspace{-1.5ex}  J. Avigad,
\emph{Update Procedures and 1-Consistency of Arithmetic},
MLQ Math. Log. Q., vol. 48, 2002.




\bibitem{AvigadSkolem} \hspace{-1.5ex}  J. Avigad,
\emph{Eliminating Definitions and Skolem Function in First-Order Logic},
ACM Trans. Comput. Log., 2002. 
\doi{10.1145/772062.772068}.








\bibitem{BerardiLiguoroMonadi} \hspace{-1.5ex}  S. Berardi and U. de' Liguoro,
\emph{Interactive Realizers. A New Approach to Program Extraction from non-Constructive Proofs}, ACM ACM Trans. Comput. Log., 2012.







\bibitem{Coquand}  \hspace{-1.5ex} T. Coquand,
\emph{A Semantic of Evidence for Classical Arithmetic},
J. Symbolic Logic, 1995.
\doi{10.2307/2275524}.

\bibitem{Friedman} \hspace{-1.5ex}  H. Friedman,
\emph{Classically and Intuitionistically Provable Recursive Functions}, Lect. Notes in Math., 1978, vol. 669/1978, 21-27.
\doi{10.1007/BFb0103100}.


\bibitem{Godeldialectica}  \hspace{-1.5ex} K. G\"odel,
\emph{Uber eine bisher noch nicht benutzte Erweiterung des finiten Standpunktes}, Dialectica 12, pp. 280-287, 1958.







\bibitem{Girard} \hspace{-1.5ex} J.-Y. Girard,
\emph{Proofs and Types},
Cambridge University Press,1989.


\bibitem{HilbertBernays}  \hspace{-1.5ex} D. Hilbert, P. Bernays, \emph{Grundalagen der Matematik}, vol II, Springer Berlin, 1939.




\bibitem{Kreiselm}  \hspace{-1.5ex} G. Kreisel, \emph{On Weak Completeness of Intuitionistic Predicate Logic}, J. Symbolic Logic, vol. 27, 1962.
\doi{10.2307/2964110}.

\bibitem{Krivine} \hspace{-1.5ex} J.-L. Krivine, \emph{Typed lambda-calculus in classical Zermelo-Fraenkel set theory}, Arch. Math. Logic, 40(3), 2001.

\bibitem{Krivine2}  \hspace{-1.5ex} J.-L. Krivine, \emph{Realizability Algebras II: new models of ZF + DC}, Log. Methods Comput. Sci., 2012.
\doi{10.2168/LMCS-8(1:10)2012}.

\bibitem{Mints} \hspace{-1.5ex}  G. Mints, S. Tupailo, W. Bucholz,
\emph{Epsilon Substitution Method for Elementary Analysis}, Arch. Math. Logic, vol. 35, 1996
\doi{10.1007/BF01273688}.











 \bibitem{Sorensen}  \hspace{-1.5ex} M. H. Sorensen, P. Urzyczyn, \emph{Lectures on the Curry-Howard isomorphism}, Stud. in Logic Found. of Math., vol. 149, Elsevier, 2006.


 \bibitem{Tait}  \hspace{-1.5ex} W. Tait, \emph{Intensional Interpretations of Functional of Finite Type}, The J. Symbolic Logic, 1967.
\doi{10.2307/2271658}.




\bibitem{TroelstraRealizability} \hspace{-1.5ex} A. Troelstra, \emph{Notions of Realizability for Intuitionistic Arithmetic and Intuitionistic Arithmetic in all Finite Types}, in Fenstad ed., Proc. of the 2nd Scandinavian Logic Symposium, North-Holland, 1972.
\doi{ 10.1016/S0049-237X(08)70854-1}.


\bibitem{TroelstraMetamathematical} \hspace{-1.5ex} A. Troelstra,  \emph{Metamathematical Investigations of Intuitionistic Arithmetic and Analysis}, Lect. Notes in Math., Springer-Verlag, Berlin-Heidelber-NewYork, 1973.

\bibitem{TroelstraHandBook} \hspace{-1.5ex}  A. Troelstra,  \emph{Realizability}, in S. Buss ed., Handbook of Proof Theory, Stud. in Logic Found. of Math., Elsevier, 1998.
\doi{10.1007/BFb0066739}.

 \bibitem{Troelstra} \hspace{-1.5ex}  A. Troelstra, D. van Dalen,  \emph{Constructivism in Mathematics, vol. I}, North-Holland, 1988.}

}

}

\end{thebibliography}
\end{document}